\theoremstyle{plain} 
\newtheorem{theorem}{Theorem}
\newcommand{\tr}{\textrm{Tr}}
\theoremstyle{definition}               
\newtheorem{definition}{Definition}
\newtheorem*{remark}{Remark}
\theoremstyle{plain}
\newtheorem{corollary}{Corollary}[theorem]
\newtheorem{lemma}{Lemma}
\newtheorem{Proposition}[theorem]{Proposition}
\newtheorem{example}{Example}
\definecolor{Mathematica1}{rgb}{0.368417, 0.506779, 0.709798}
\definecolor{Mathematica2}{rgb}{0.880722, 0.611041, 0.142051}
\definecolor{Mathematica3}{rgb}{0.560181, 0.691569, 0.194885}
\newcommand{\calA}{{\mathcal A}}
\newcommand{\be}{\begin{equation}}
 \newcommand{\ee}{\end{equation}}
\newcommand{\Tr}{{\rm Tr}}
\definecolor{BSorange}{RGB}{140,50,0}
\newcommand{\BS}[1]{{\color{BSorange}\footnotesize{(BS) #1}}}
\newcommand{\YZ}[1]{{\color{red}\footnotesize{(YZ) #1}}}
\newcommand{\SV}[1]{\textcolor{teal}{#1}}
\begin{document}
\title{Chirality, magic, and quantum correlations in multipartite quantum states} 
\author{Shreya Vardhan}
\email{vardhan@stanford.edu}
\affiliation{Stanford Institute for Theoretic Physics, Stanford University, Stanford, CA 94305, USA}
\author{Bowen Shi}
\affiliation{Department of Physics, University of Illinois, Urbana, Illinois 61801, USA}
\affiliation{
Department of Physics, University of California at San Diego, La Jolla, CA 92093, USA}
\affiliation{Department of Computer Science, University of California, Davis, CA 95616, USA}
\author{Isaac H. Kim}
\affiliation{Department of Computer Science, University of California, Davis, CA 95616, USA}
\author{Yijian Zou}
\email{yzou@perimeterinstitute.ca}
\affiliation{Perimeter Institute for Theoretical Physics, Waterloo, Ontario N2L 2Y5, Canada}

\date{\today}

\begin{abstract}
We introduce  a notion of chirality for generic quantum states. A chiral state is defined as 
a state which
cannot be transformed into its complex conjugate in a local product basis using local unitary operations.  We introduce a number of quantitative measures of chirality which vanish for non-chiral states. A faithful measure called  the ``chiral log-distance'' is defined in terms of the maximum fidelity between the state and its complex conjugate under local unitary transformations. We further introduce a set of computable measures that involve nested commutators of modular Hamiltonians. We show general relations between chirality and the amount of ``magic'' or non-stabilizer resource in a state.  We first  show that stabilizer states of qubits are non-chiral. Furthermore,  magic monotones such as the stabilizer nullity and stabilizer fidelity are  lower-bounded by the chiral log-distance. We further relate chirality to discord-like quantum correlations. We define a measure of discord-like correlations called the intrinsic interferometric power, and show that it is lower-bounded by a nested commutator chirality measure. One interesting aspect of chirality is that it is not monotonic under local  partial traces, indicating that it is a unique kind of resource that is not captured by traditional quantum resource theories.
\end{abstract}
\maketitle

\section{Introduction}

Chirality refers to the lack of invariance between  certain physical systems and their  mirror images. This property naturally appears in a variety of quantum many-body systems and quantum field theories, and underlies a wide range of interesting physical phenomena. For example, in condensed matter physics, topological insulators and topological superconductors \cite{Qi_2011} have chiral edge states which transport charge only in one direction. 
In high-energy physics, chiral fermions are essential to electroweak interactions  in the Standard Model~\cite{Peskin}. In many-body systems, chirality is identified from  macroscopic properties, such as the direction of the energy current or charge current. From a microscopic perspective, it is natural to ask how chirality can be characterized in terms of the structure of the quantum state. 

In this paper, we introduce a natural information-theoretic definition of chirality as an intrinsic property of a general quantum state. Many-body states such as the edge states of quantum (thermal) Hall insulators and superconductors \cite{Kane1997,Wen1991,Kitaev2005}, which are ``macroscopically'' chiral due to the direction of the charge or energy current,  are also chiral by our microscopic definition.~\footnote{This statement  can be seen from the  results of the previous works~\cite{Kim2022-J, Kim2022-J-long, Zou_2022_modular, Fan_2022}, as we will discuss more explicitly later.} While these many-body settings provide the motivation for the definition, in this paper we will explore chirality as a general property of quantum states, including those of few-body systems. Our goal is to understand whether and how this notion of chirality is related to other features of the quantum state, such as its entanglement,  nonstabilizerness, and discord-like quantum correlations. 

 Intuitively, a chiral state is one that is sharply distinguishable from its time-reversal. Recall that time-reversal is an anti-unitary operation, which is often defined to be complex conjugation with respect to a basis which depends on the physical context. To give an intrinsic {\it basis-independent}  definition of chirality for a quantum state, we will find it useful to define some partition of the Hilbert space into $n$ subsystems, $A_1,...,A_n$. With respect to this partition, we propose the following definition: 

\vspace{0.3cm}
 
\begin{definition}[Chiral quantum state]\label{chiralitydef}
Consider an $n$-partite state   $\rho_{A_1 A_2 \cdots A_n}$, and some arbitrary product basis $\ket{s_1}_{A_1}\otimes ... \otimes \ket{s_n}_{A_n}$ between the subsystems. 
Consider the complex conjugation $\rho^{\ast}_{A_1...A_n}$ of $\rho_{A_1...A_n}$ in this basis. 

      $\rho_{A_1 A_2 \cdots A_n}$ is \emph{$n$-partite chiral} if 
   \begin{equation}
       \rho^*_{A_1 A_2 \cdots A_n} \ne (\otimes_{i=1}^n  U_{A_i})  \, \rho_{A_1 A_2 \cdots A_n}\,  (\otimes_{i=1}^n  U_{A_i}^\dagger) \label{chiraldef}
   \end{equation}
    for any set of local unitary operators $\{U_{A_i}\}$. 
\end{definition}

\vspace{0.3cm}

Note that the definition of chirality is independent of the particular product basis $\ket{s_1}_{A_1}\otimes ... \otimes \ket{s_n}_{A_n}$ chosen for the complex conjugation as we allow all possible local unitaries in \eqref{chiraldef}. 

We first introduce a variety of measures to detect and quantify chirality in a given state. Starting with the above definition, a natural way to quantify chirality is using the minimum distance between the LHS and RHS of \eqref{chiraldef} for all possible choices of the $\{ U_{A_i}\}$. We introduce a measure called the {\it chiral log-distance} \eqref{eq:log-distance} based on this idea. While this measure is intuitive and useful for certain proofs, it has the drawback that it requires an optimization over the set of all local unitaries, and does not have an explicit computable expression in terms of the density matrix. To address this issue, we further introduce a 
family of more easily computable measures called {\it nested commutator measures}, which involve commutators between modular Hamiltonians associated with reduced density matrices of various subsystems. 
These measures further have the property that they are additive under tensor product. The ``modular commutator'' introduced and studied in~\cite{Kim2022-J, Kim2022-J-long, Zou_2022_modular, Fan_2022,conformal-geometry2024} 
in various quantum many-body systems can be seen as an example of such a measure for $n=3$, which in particular shows that edge states of quantum Hall insulators and superconductors are tripartite chiral. 
 Another computable quantity previously introduced in the literature, which can detect tripartite chirality, is the rotated Petz map fidelity. This measure concerns the fidelity of a particular recovery operation for the erasure of $C$ in the tripartite state $\rho_{ABC}$ \cite{Junge:2015lmb,Wilde:2015xoa}. It has been demonstrated that the fidelity is asymmetric with respect to the rotation parameter for chiral edge states, but symmetric for  non-chiral states \cite{Vardhan2023,Zou_2024_PRB}.

On an intuitive level, chirality is an inherently quantum property of the state due to the crucial role played by complex numbers in the wavefunction in Definition~\ref{chiralitydef}. A natural question is whether chirality can be seen as a quantum resource~\cite{Chitambar_2019} in a manner  similar to resource-theoretic characterizations of  entanglement~\cite{Vidal_2000} or non-stablizerness~\cite{Veitch_2014}. A  resource-theoretic formulation involves a setup with certain free states and free operations, and defines ``resource states'' as those that cannot be prepared from free states and operations. For example, in the resource theory of entanglement, separable states are free states, local operations and classical communication (LOCC) are free operations, and entangled states are resource states. 

We provide evidence that it may not be possible to fit chirality within the framework of standard resource theories by taking non-chiral states to be free states and chiral states to be resource states. We will show that non-chiral states can be mapped to chiral states by partial traces within subsystems, which should intuitively be included in the set of free operations. We will further show that while chirality in pure states necessarily requires quantum entanglement, as can immediately be seen from Definition~\ref{chiralitydef}, chirality in mixed states does not require quantum entanglement. Indeed, we will provide explicit  examples of separable bipartite states which are chiral. 
On the other hand, we will show that measures of chirality provide lower bounds on two other  kinds of quantum resources in general quantum states: {\it magic} and {\it discord-like quantum correlations}. 

{\it Magic} refers to the amount of ``non-stabilizer resource'' present in a quantum state. 
Stabilizer states are a subset of all quantum states which can be efficiently simulated on a classical computer, despite in general being highly entangled~\cite{Gottesman1998}. Non-stabilizer states are therefore a fundamental resource for universal quantum computation, and the usefulness of a given state for this purpose can be rigorously quantified by measures called ``magic monotones''~\cite{Veitch_2014}, which are non-increasing under free operations known as Clifford operations. A number of interesting quantities have been shown to be magic monotones, such as the relative entropy of magic \cite{Veitch_2014}, the robustness of magic \cite{Howard_2017}, stabilizer nullity~\cite{Beverland_2020}, and the min-relative entropy of magic~\cite{Liu_2022}.~\footnote{The min-relative entropy of magic has been shown to be a magic monotone 
in the case of pure states.} 

In this paper, we show that if a quantum state is chiral, then it necessarily contains magic or nonstabilizer resource, and moreover the extent of chirality provides a lower bound on the amount of non-stabilizerness. More explicitly, we show that (i) all stabilizer states of qubits, pure or mixed, are non-chiral, and (ii) for pure states, measures of magic such as the stabilizer nullity and the stabilizer fidelity can be lower-bounded by the chiral log-distance.

 {\it Discord-like quantum correlations}  provide a general notion of quantum correlations beyond entanglement~\cite{Ollivier2001}. Such correlations  may be present in certain separable mixed  states. In a resource theory formulation for such quantum correlations~\cite{discordreview}, the free states are a subset of separable states known as ``classical-quantum states.'' 
 For a given bipartition of a system, a classical-quantum state is defined as a state  for which there exists a  complete projective measurement within one of the  subsystems under which the state is invariant. 
A number of quantities have been introduced to quantify discord-like correlations. 
We introduce a new  measure of such correlations called the {\it intrinsic interferometric power}, closely related to the interferometric power of~\cite{Girolami2013,Girolami2014}. Unlike the interferometric power, this quantity is defined purely in terms of properties of the state and without reference to an external  Hamiltonian or energy spectrum. We show that our nested commutator measures for chirality can be used to provide a lower bound on the intrinsic interferometric power. 

The plan of the paper is as follows. 
In Section~\ref{sec:general}, we discuss some examples of few-body chiral  states and introduce various measures of chirality, including the chiral log-distance and nested commutator measures. We also discuss some general consequences of Definition~\ref{chiralitydef}, including the non-monotonicity of chirality under local partial traces. In Section~\ref{sec:magic}, we study the relation between chirality and magic. In Section~\ref{sec:correlations}, we study the relation between chirality and discord-like quantum correlations.
In Section~\ref{sec:discussion}, we conclude with possible applications of our results to quantum many-body dynamics and quantum phases of matter.

\section{Examples, general properties,  and measures of chirality}
\label{sec:general}

In this section, we develop a general framework for studying chirality as an information-theoretic property of multipartite quantum states. In Sec.~\ref{sec:2a}, we construct an explicit example of a chiral few-body state, and use it to illustrate certain general features of the relation between $n$-partite chirality for different $n$. In Sec.~\ref{sec:2b}, we discuss general requirements for a measure of chirality, and introduce a measure which we call the chiral log-distance, which is physically intuitive but not easily computable. In Sec.~\ref{sec:2c}, we introduce computable and additive measures of chirality constructed from nested commutators, which are odd under complex conjugation of the state in a local product basis. In Sec.~\ref{sec:2d}, as an illustration, we evaluate these computable measures in random two-qubit mixed states, and in particular note the lack of correlation between chirality and entanglement.

\subsection{Explicit examples and general properties of chirality}
\label{sec:2a}

From  Definition~\ref{chiralitydef}, any state $\rho$ is non-chiral for $n=1$. Similarly, $n$-partite product states are always $n$-partite non-chiral.
Further, it can be seen using the Schmidt decomposition that any pure state is non-chiral for $n=2$. The simplest non-trivial case where we can look for examples of chiral states is therefore that of a bipartite mixed state. Let us first provide an explicit example  which can be checked to be chiral directly from the definition:

\begin{example}
\label{ex:chiral_deg}
The following state of a qutrit $A$ and a qubit $B$ is bipartite chiral: 
\begin{equation} 
\label{chiral_example}
    \rho_{AB} = \sum_{i=1}^3 p_i |i\rangle\langle i|_A \otimes |\psi_i\rangle\langle \psi_i|_B
\end{equation}
where $\{\ket{i}_A\}$ is an orthonormal basis for $A$, the $p_i$'s are nondegenerate, $p_i >0$, and 
\begin{equation}
\label{eq:ex_chiral}
    |\psi_1\rangle = |0\rangle, |\psi_2\rangle = \frac{1}{\sqrt{2}}(|0\rangle + |1\rangle), |\psi_3\rangle = \frac{1}{2}(|0\rangle +  \sqrt{3}i|1\rangle).
\end{equation}

{\rm To see that \eqref{chiral_example} is chiral, assume for contradiction that one can find $U_A, U_B$ such that $(U_A \otimes U_B)\rho_{AB} (U^{\dagger}_A \otimes U^{\dagger}_B) = \rho^{*}_{AB}$. We can check that since the $p_i$'s are nondegenerate and $\rho^{*}_A = \rho_A$, we must have $U_A |i\rangle \langle i| U_A^\dagger = |i\rangle \langle i| $.
This further implies we must have  $U_B |\psi_i\rangle = e^{-i\theta_i}|\psi^{*}_i\rangle$. 
The latter implies that $\langle \psi_i |\psi_j \rangle = e^{i(\theta_i -\theta_j)} \langle \psi^{*}_i |\psi^{*}_j \rangle = e^{i(\theta_i -\theta_j)} \langle \psi_i |\psi_j \rangle^{*} $, which implies that $\mathrm{arg} \langle \psi_i |\psi_j\rangle + \mathrm{arg} \langle \psi_j |\psi_k\rangle = \mathrm{arg} \langle \psi_i |\psi_k\rangle$. However, this is not true since $\langle \psi_1 |\psi_2\rangle$ and $\langle \psi_1|\psi_3\rangle$ are real while $\langle \psi_2 | \psi_3\rangle$ is complex.}
\end{example}


It is also useful to consider a purification of \eqref{chiral_example} by adding a qutrit $A'$: 
\begin{equation} \label{purestate_chiral}
    |\psi_{AA'B}\rangle = \sum_{i=1}^3 \sqrt{p_i} |i\rangle_{A} \otimes |i\rangle_{A'} \otimes |\psi_i\rangle_{B}
\end{equation}
This example provides a useful illustration of two general features of chirality. Note that: 
\begin{enumerate}
\item $\ket{\psi_{AA'B}}$ is a {\it tripartite chiral state}. This is an example of the more general fact in Proposition~\ref{prop_n} below. 

\item For a bipartition of the system into $AA'$ and $B$, the state \eqref{purestate_chiral}, like any pure state, is bipartite non-chiral. This immediately shows that bipartite chirality does not monotonically decrease under the action of local partial traces. By applying a partial trace in $A'$, \eqref{purestate_chiral} can be mapped to the bipartite chiral state~\eqref{chiral_example}. It is straightforward to generalize this example to show that $n$-partite chirality does not monotonically decrease under local partial traces. 

\end{enumerate}

\begin{Proposition} \label{prop_n}
A pure state $|\psi_{A_1 A_2 \cdots A_n}\rangle$ is $n$-partite chiral,
if and only if its $(n-1)$-partite reduced density matrix  $\rho_{A_1 A_2 \cdots A_{n-1}}$ is $(n-1)$-partite chiral. 
 \end{Proposition}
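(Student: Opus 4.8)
The plan is to prove both implications by contraposition, since the statements are most naturally phrased in terms of \emph{non}-chirality: one direction is immediate, and the other carries all the content.

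\emph{Pure state non-chiral $\Rightarrow$ reduced state non-chiral.} Suppose $\rho_{A_1\cdots A_n}=|\psi\rangle\langle\psi|$ is $n$-partite non-chiral, so that $(\otimes_{i=1}^n U_{A_i})\,\rho_{A_1\cdots A_n}\,(\otimes_{i=1}^n U_{A_i}^\dagger)=\rho^*_{A_1\cdots A_n}$ for suitable local unitaries. I would then trace out $A_n$ on both sides, using that $\Tr_{A_n}$ is invariant under conjugation by $U_{A_n}$ (cyclicity) and that complex conjugation in a product basis commutes with $\Tr_{A_n}$. This gives $(\otimes_{i=1}^{n-1} U_{A_i})\,\rho_{A_1\cdots A_{n-1}}\,(\otimes_{i=1}^{n-1} U_{A_i}^\dagger)=\rho^*_{A_1\cdots A_{n-1}}$, i.e.\ $\rho_{A_1\cdots A_{n-1}}$ is $(n-1)$-partite non-chiral. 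This is a two-line argument.

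\emph{Reduced state non-chiral $\Rightarrow$ pure state non-chiral (the substantive direction).} Here I would start from the Schmidt decomposition across the cut $A_1\cdots A_{n-1}\,|\,A_n$, writing $|\psi\rangle=\sum_\alpha\sqrt{p_\alpha}\,|\phi_\alpha\rangle_{A_1\cdots A_{n-1}}\otimes|\chi_\alpha\rangle_{A_n}$, so that $\rho_{A_1\cdots A_{n-1}}=\sum_\alpha p_\alpha|\phi_\alpha\rangle\langle\phi_\alpha|$ and $\rho^*_{A_1\cdots A_{n-1}}=\sum_\alpha p_\alpha|\phi_\alpha^*\rangle\langle\phi_\alpha^*|$. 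Given a local unitary $U\equiv\otimes_{i=1}^{n-1}U_{A_i}$ with $U\rho_{A_1\cdots A_{n-1}}U^\dagger=\rho^*_{A_1\cdots A_{n-1}}$, equality of spectra forces $U$ to map each eigenspace of $\rho_{A_1\cdots A_{n-1}}$ onto the corresponding eigenspace of $\rho^*_{A_1\cdots A_{n-1}}$, so on the support one may write $U|\phi_\alpha\rangle=\sum_\beta u_{\beta\alpha}|\phi_\beta^*\rangle$ with $(u_{\beta\alpha})$ a unitary matrix that is \emph{block-diagonal with respect to the Schmidt-value sectors}, i.e.\ $u_{\beta\alpha}=0$ unless $p_\beta=p_\alpha$. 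The key move is to ``copy'' this data onto $A_n$: define $V$ on the support of $\rho_{A_n}$ by $V|\chi_\alpha\rangle=\sum_\beta u^*_{\beta\alpha}|\chi_\beta^*\rangle$ and extend it to a full unitary $U_{A_n}$ of $\mathcal{H}_{A_n}$ — legitimate since the two Schmidt bases span spaces of equal dimension, and $V$ is a partial isometry by orthonormality of the Schmidt vectors together with unitarity of $u$. A short computation using the block structure — the inner sum $\sum_\alpha\sqrt{p_\alpha}\,u_{\beta\alpha}u^*_{\gamma\alpha}$ is supported on $p_\alpha=p_\beta=p_\gamma$, so $\sqrt{p_\alpha}=\sqrt{p_\beta}$ factors out and $(uu^\dagger)_{\beta\gamma}=\delta_{\beta\gamma}$ remains — then gives $(U\otimes U_{A_n})|\psi\rangle=\sum_\beta\sqrt{p_\beta}\,|\phi_\beta^*\rangle\otimes|\chi_\beta^*\rangle=|\psi^*\rangle$, hence $(U\otimes U_{A_n})\,\rho_{A_1\cdots A_n}\,(U\otimes U_{A_n})^\dagger=\rho^*_{A_1\cdots A_n}$ and $|\psi\rangle$ is $n$-partite non-chiral.

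\emph{Main obstacle.} The only real subtlety is the bookkeeping around degeneracies in the Schmidt spectrum: one must argue that the unitary $u$ implementing $U\rho U^\dagger=\rho^*$ can be chosen block-diagonal in the eigenvalue sectors, and that it is exactly this block structure that makes the ``transported'' unitary $V$ on $A_n$ produce $|\psi^*\rangle$ rather than some other Schmidt-basis rotation of it. A minor additional point is extending $V$ from the support of $\rho_{A_n}$ to a genuine unitary on all of $\mathcal{H}_{A_n}$, and noting that $U$ need only act correctly on the support of $\rho_{A_1\cdots A_{n-1}}$, with arbitrary action elsewhere.
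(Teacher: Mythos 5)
Your proof is correct, and the substantive (``if'') direction takes a genuinely different route from the paper. For that direction the paper is slick: it applies the local unitary $V_{A_1}\otimes\cdots\otimes V_{A_{n-1}}$ to $|\psi\rangle$, observes that the resulting state and $|\psi^*\rangle$ are two purifications of the same reduced density matrix $\rho^*_{A_1\cdots A_{n-1}}$, and invokes Uhlmann's theorem to produce the missing unitary $W_{A_n}$ on the purifying factor in one stroke --- so all of the degeneracy bookkeeping you worry about is absorbed into that black box. You instead build $U_{A_n}$ by hand from the Schmidt decomposition, showing that the intertwiner $u$ is block-diagonal in the Schmidt-value sectors and transporting $u^*$ onto $A_n$; your computation $\sum_\alpha\sqrt{p_\alpha}\,u_{\beta\alpha}u^*_{\gamma\alpha}=\sqrt{p_\beta}\,\delta_{\beta\gamma}$ is right, as are the points about extending $V$ from the support of $\rho_{A_n}$ and about $U$ only needing to be specified on the support of $\rho_{A_1\cdots A_{n-1}}$. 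What each approach buys: the Uhlmann route is shorter and immune to degeneracy subtleties, while yours is constructive --- it exhibits the unitary on $A_n$ explicitly in terms of the Schmidt data --- at the cost of the block-structure argument you correctly identify as the main obstacle. The ``only if'' direction is the same two-line partial-trace argument in both.
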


 \begin{proof}
 We only need to prove the contrapositive statement: a pure state $|\psi_{A_1 A_2 \cdots A_n}\rangle$ is $n$-partite non-chiral,
if and only if its $(n-1)$-partite reduced density matrix  $\rho_{A_1 A_2 \cdots A_{n-1}}$ is $(n-1)$-partite non-chiral. 
The proof of the ``only if'' part goes as follows. Suppose $|\psi_{A_1A_2\cdots A_n}\rangle$ is non-chiral, then there exists a set of unitaries $\{U_{A_i}\}_{i=1}^n$ whose tensor product maps $|\psi\rangle$ to $|\psi^*\rangle$. It follows immediately that  conjugation by $U_{A_1} \otimes \cdots \otimes U_{A_{n-1}}$ takes $\rho_{A_1A_2\cdots A_{n-1}}$ to $\rho^*_{A_1A_2\cdots A_{n-1}}$.  
The proof of the ``if'' part goes as follows. Let the tensor product of $\{V_{A_i}\}_{i=1}^{n-1}$ be the unitary that maps $\rho_{A_1A_2\cdots A_{n-1}}$ to $\rho^*_{A_1A_2\cdots A_{n-1}}$ by  conjugation. Then the state $|\varphi\rangle :=V_{A_1} \otimes \cdots \otimes V_{A_{n-1}}|\psi\rangle $ must reduce to  $\rho^*_{A_1 A_2 \cdots A_{n-1}}$, and this is identical to the reduced density matrix of $|\psi^*\rangle$ on $A_1 A_2 \cdots A_{n-1}$. Then, by Uhlmann's theorem~\cite{uhlmann1976transition}, there exists a unitary $W_{A_n}$ such that $|\psi^*\rangle = W_{A_n} |\varphi\rangle$. Thus, $|\psi_{A_1,A_2,\cdots,A_{n}}\rangle$ is $n$-partite non-chiral.
This completes the proof. 
\end{proof}

If we know that a state is $(n-1)$-partite chiral, then we know it is also $n$-partite chiral on dividing any of the $n-1$ subsystems further into two. In this sense, bipartite chirality is the strongest form of chirality for mixed states, and tripartite chirality is the strongest form of chirality for pure states. We can define a mixed (pure) state as being chiral independent of partition if it is 
 chiral for all possible bipartitions (tripartitions). 

\subsection{Chirality measures} 
\label{sec:2b}


For a general quantum state, a simple analysis such as that of Example~\ref{ex:chiral_deg} is not possible, and it is nontrivial to determine whether or not it is chiral. To address this, it is useful to introduce quantities whose non-vanishing can detect chirality. 
Moreover, it would be useful   to  quantify the extent to which a given state is chiral. In this section, we introduce a few different measures to detect and quantify chirality.

We define an $n$-partite chirality measure as a map from $\rho$ to a real number,
$\calA_{A_1 A_2 \cdots A_n}(\rho) \in \mathbb{R}$,  satisfying two properties: 
\begin{enumerate}
\item It is invariant under local unitaries:
\begin{equation}\label{eq:A0}
    \calA_{A_1 A_2 \cdots A_n}(\rho) = \calA_{A_1 A_2 \cdots A_n}(\rho') \quad \textrm{for }\rho \sim \rho'
\end{equation}
where $\rho \sim \rho'$ refers to $ \rho' = (\otimes_i  U_{A_i})  \, \rho\,  (\otimes_i  U_{A_i}^\dagger)$
 for some unitary operators $\{U_{A_i}\}_{i=1}^n$. 
\item  It vanishes for non-chiral states: 
\begin{equation}\label{eq:A00}
    \calA_{A_1 A_2 \cdots A_n}(\rho)=0 \quad \textrm{for } \rho \sim \rho^*.
\end{equation}
\end{enumerate}
We can consider chirality measures that are even under $\rho \to \rho^*$, as well as those that are odd under $\rho \to \rho^*$.\footnote{Note that any chirality measure can be written as the sum of an even part and an odd part: $\calA = C + J$, where $C$ stands for the even part and $J$ stands for the odd part. That is, $C(\rho) = (\calA(\rho) + \calA(\rho^*))/2$ and $J(\rho) =(\calA(\rho) -\calA(\rho^*))/2$.}
Below, we consider the case of bipartite chirality in mixed states. For a tripartite pure state $\ket{\psi_{ABC}}$, these bipartite measures for the reduced density matrix $\rho_{AB}$ can also be used to quantify tripartite chirality of $\ket{\psi_{ABC}}$.    The generalization to $n$-partite chirality is straightforward.

One natural choice of chirality measure is the following quantity, which we will call \emph{the chiral log-distance}: 
    \begin{equation}\label{eq:log-distance}
    C_{A|B}(\rho_{AB}) = -\log \max_{U_A, U_B}
    F(\rho^*_{AB}, U_A \otimes U_B\rho_{AB} U_A^\dagger \otimes U_B^\dagger),
\end{equation}
where the maximization is over unitary operators $U_A$, $U_B$, and $F(\rho,\sigma):= ||\sqrt{\rho} \sqrt{\sigma}||^2_1$  is the Uhlmann fidelity.
The log-distance can be motivated directly from the definition of chiral states. 
The quantity is non-negative, and it is zero if and only if $\rho$ is nonchiral, i.e., the measure is faithful.
It is even under $\rho \leftrightarrow \rho^*$, that is,
\begin{equation}\label{eq:even-C}
    C_{A|B} (\rho^*) = C_{A|B}(\rho).
\end{equation}

We note that $C_{A|B}(\rho)$ can be recast as a solution to the following  optimization problem using Uhlmann's theorem~\cite{uhlmann1976transition}:
\begin{equation}
\begin{aligned}
    &~~~~\exp(-C_{A|B}(\rho_{AB})) \\
    &= \max_{U_A, U_B, U_C} |\langle \rho|_{ABC} (U_A\otimes U_B \otimes U_C) |\rho^*\rangle_{ABC}|,
\end{aligned}
\label{eq:cab_opt}
\end{equation}
where $|\rho\rangle$ and $|\rho^*\rangle$ are purifications of $\rho$ and $\rho^*$, respectively. We denote the purifying system as $C$. Interestingly, the optimization over individual $U_A, U_B,$ and $U_C$ can be solved exactly. For instance, suppose we fix $U_B$ and $U_C$. Then the objective function can be written as the absolute value of $\text{Tr}(U_A O_A)$ for some matrix $O_A$. We can simply take the singular value decomposition of $O_A= U\Sigma V^{\dagger}$ and choose $U_A=VU^{\dagger}$. However, Eq.~\eqref{eq:cab_opt} is nonetheless not a global convex optimization problem. As such, solving it requires using a gradient-based algorithm or a greedy algorithm. These algorithms may fall into local extrema, in which case only a strict lower bound of $C_{A|B}$ can be obtained. 

Below, we will also consider  chirality measures $J: \rho \to \mathbb{R}$ which are  \emph{odd} under $\rho \to \rho^*$, i.e., 
\begin{equation}\label{eq:odd-J}
    J_{A|B}(\rho^{*}) = - J_{A|B}(\rho).
\end{equation}
Considering odd chirality measures is natural from the perspective that they can distinguish states with ``left'' and ``right'' directionality. Such measures are aligned with the physical motivation for chirality and similar to quantities such as charge and energy currents which typically detect chirality in physical setups.\footnote{For tripartite states, the modular commutator $J(A,B,C)_\rho := i \Tr([\log \rho_{AB}, \log \rho_{BC}] \rho_{ABC})$ is one such example \cite{Kim2022-J-long}.} Moreover, we will show below that only odd measures can satisfy the following requirement of {\it additivity}:
\begin{equation}\label{eq:additive-def}
    \calA_{AA'|BB'}(\rho_{AB}\otimes \rho_{A'B'}) = \calA_{A|B}(\rho_{AB}) + \calA_{A'|B'}(\rho_{A'B'}).
\end{equation}

\begin{lemma}[Additivity]\label{lemma:additive}
A chirality measure $\calA_{A|B}$ is additive only if it is odd under $\rho \to \rho^*$. 
\end{lemma}
\begin{proof}
Suppose a chirality measure $\calA_{A|B}$ is additive. Let $A'$ and $B'$ to be identical copies of $A$ and $B$.
\begin{equation}\label{eq:additive-proof}
   \calA_{AA'|BB'}(\rho_{AB}\otimes \rho^*_{A'B'}) = \calA_{A|B} (\rho) + \calA_{A'|B'}(\rho^*) 
\end{equation}
by additivity. On the other hand, we can construct local SWAP (unitary) gates $\text{SWAP}(A,A')$ and $\text{SWAP}(B,B')$ such that $\rho_{AB}\otimes \rho^*_{A'B'}$ is converted to $\rho^*_{AB}\otimes \rho_{A'B'}$, which implies that $ \calA_{AA'|BB'}(\rho_{AB}\otimes \rho^*_{A'B'})=0$. Plugging this back in Eq.~\eqref{eq:additive-proof}, we get
$\calA_{A|B}(\rho) = - \calA_{A|B}(\rho^*)$. Thus, any additive chirality measure must be odd.
\end{proof}

In the next subsection, we will discuss a general recipe for constructing odd bipartite chirality measures $J_{A|B}$, including additive measures, by using nested commutators of reduced density matrices on different subsystems. These measures will further have the advantage that they have explicit expressions in terms of the density matrix and therefore are computable.  

\subsection{Measures involving nested commutators} \label{sec:2c}


We construct additive measures of bipartite chirality, which can be written as certain functions of the density matrices $\rho_S$ for various subsystems $S$ and their modular Hamiltonians  $K_S :=-\log \rho_S$.~\footnote{If $\rho$ is not full-rank, we define the $\log$ only on the support of $\rho$, i.e., $K_S = \sum_{i, ~p_i \neq 0} \log(p_i)\ket{i}\bra{i}$ where $p_i$, $\ket{i}$ are eigenvalues and eigenstates of $\rho$.}   Such measures are computable without optimization over local unitaries, and are odd under complex conjugation.

A previously known tripartite chirality measure, $J(A,B,C) = i \Tr(\rho_{ABC} [K_{AB}, K_{BC}])$, uses a single commutator of certain modular Hamiltonians. For bipartite systems, the most naive generalization fails, as $[K_A, K_B]=0$ and $\tr(\rho_{AB}[K_{AB},K_A])=\tr(\rho_{AB}[K_{AB},K_B]) = 0$ identically. 
However, one may still construct additive chirality measures using multiple commutators or anticommutators involving $K_A, K_B$ and $K_{AB}$. One example is the following quantity: 
\begin{equation}
  J_2(\rho_{AB}) := i\tr(\rho_{AB} \{[K_{AB},K_A], K_B\}). \label{j2def}
\end{equation}
One can check that $J_2$ is additive and odd under complex conjugation of $\rho_{AB}$ in a product basis. More generally, we will propose below a systematic way to construct a family of such additive chirality measures in the form 
\be \label{j_general}
J_{A|B}(\rho_{AB}) = i\tr(\rho_{AB} X) \, , 
\ee
 where $X$
is a real polynomial (possibly of infinite degree) involving the modular Hamiltonians $K_A$, $K_B$, $K_{AB}$. 

The form \eqref{j_general} automatically satisfies local unitary invariance.
Now let $S$ be the set of 
maps $X_{A|B}$ from $\rho_{AB}$ to operators on the Hilbert space which satisfy the following additivity property when $A'$, $B'$ are respectively identical copies of $A$, $B$: 
\be
X_{AA'|BB'}(\rho_{AB}\otimes \rho_{A'B'}) = X_{A|B}(\rho_{AB}) + X_{A'|B'}(\rho_{A'B'}) \, .
\ee
Simple examples of such functions are $X_{A|B}(\rho_{AB}) = K_A, K_B, K_{AB}$. 

Next, let us define the following subsets of $S$:
\begin{align} 
 S_{\pm} = \{ X_{A|B} \in S ~|~  X_{A|B}(\rho^{*}_{AB}) = \pm X_{A|B}(\rho_{AB})^{T}\}.
\end{align}
We will show that additive odd chirality measures can be constructed sequentially by using nested commutators involving $K_A$, $K_B$, and $K_{AB}$. 
 In the following, we will omit the subscript $A|B$ as the bipartition is clear from the context.

\begin{Proposition}[Nested commutator measures]
\label{prop:construction} 
    The following holds for the sets $S_{\pm}$,

    (1) $K_A, K_B, K_{AB} \in S_{+}$
    
    (2) If $X,Y \in S_{\pm}$, then $\alpha X+\beta Y \in S_{\pm}, \forall \alpha,\beta\in \mathbb{R}$.

    (3) If $X \in S_{a}, Y\in S_{b}$, then $[X,Y]\in S_{-ab}$

    (4) If $X\in S_{-}$ then $J = i\tr(\rho_{AB}X)$ is an additive chirality measure.

    (5) If $X\in S_{a}$, $Y\in S_{b}$, $ab = -1$ and $\tr(\rho_{AB} X(\rho_{AB})) $ identically vanishes, then $J = i\tr(\rho_{AB}\{X,Y\})$ is an additive chirality measure.
\end{Proposition}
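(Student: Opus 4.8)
The plan is to verify the five claims in order, since each builds on the previous ones. For \textbf{(1)}, I would use the definition of complex conjugation in the product basis together with the fact that $\rho_S^* = \rho_S^T{}^*{}^T$... more carefully: if $\rho_{AB}$ has matrix elements in the product basis, then $\rho_S^*$ (conjugation in the induced product basis on $S$) is obtained from $\rho_S$ by complex-conjugating entries, while the transpose exchanges row and column indices. Since $K_S = -\log\rho_S$ is a real analytic function of $\rho_S$ applied entrywise in the spectral sense, conjugation and transpose commute with taking $\log$, so $K_S(\rho^*_{AB}) = (-\log\rho_S^*) = (-\log \rho_S)^{*} = ((-\log\rho_S)^T)^{*}{}^{*}{}$; the cleanest statement is $K_S(\rho^*) = K_S(\rho)^*$ and since $K_S$ is Hermitian, $K_S(\rho)^* = K_S(\rho)^T$. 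Hence $K_A, K_B, K_{AB} \in S_+$. I should also note each of these is additive under tensor products (so lies in $S$ to begin with), which is standard: $K_{S_1 S_2}(\rho_1 \otimes \rho_2) = K_{S_1}(\rho_1)\otimes \mathbb{1} + \mathbb{1}\otimes K_{S_2}(\rho_2)$.

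For \textbf{(2)}, linearity of both the transpose and the complex conjugation over real coefficients gives it immediately: $(\alpha X + \beta Y)(\rho^*) = \alpha X(\rho^*) + \beta Y(\rho^*) = \pm(\alpha X(\rho)^T + \beta Y(\rho)^T) = \pm(\alpha X + \beta Y)(\rho)^T$, using that $\alpha,\beta$ are real so they are unaffected by conjugation. Additivity of $\alpha X + \beta Y$ is likewise immediate. For \textbf{(3)}, the key algebraic fact is $([X,Y])^T = [Y^T, X^T] = -[X^T,Y^T]$. So if $X(\rho^*) = a\, X(\rho)^T$ and $Y(\rho^*) = b\, Y(\rho)^T$, then $[X,Y](\rho^*) = [X(\rho^*),Y(\rho^*)] = ab\,[X(\rho)^T, Y(\rho)^T] = -ab\,([X,Y](\rho))^T$, giving $[X,Y]\in S_{-ab}$. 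Additivity of $[X,Y]$ follows from additivity of $X$ and $Y$ together with the fact that commutators of the form $[X_1\otimes\mathbb{1} + \mathbb{1}\otimes X_2,\ Y_1\otimes\mathbb{1}+\mathbb{1}\otimes Y_2] = [X_1,Y_1]\otimes \mathbb{1} + \mathbb{1}\otimes[X_2,Y_2]$ since cross terms commute.

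For \textbf{(4)}, I would show $J = i\tr(\rho_{AB} X)$ satisfies the two chirality-measure axioms plus additivity. Local-unitary invariance is automatic from the form \eqref{j_general}. For reality of $J$: $\overline{i\tr(\rho X)} = -i\,\overline{\tr(\rho X)} = -i\tr(\overline{\rho X}) = -i\tr((\rho X)^*) = -i\tr(X^* \rho^*)^{T}$ — here I want to conclude $J$ is real by relating to the transpose; the clean route is $\overline{\tr(\rho X)} = \tr(\rho^\dagger X^\dagger) = \tr(\rho X^\dagger)$ since $\rho$ Hermitian, and then use that $X(\rho)$ for these nested commutators is anti-Hermitian when $X\in S_-$ (since $K_S$ Hermitian, a bracket of Hermitians is anti-Hermitian, and $S_-$ is generated by an odd number of brackets)... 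Actually I should be careful: $S_-$ elements need not be anti-Hermitian in general, but the nested-commutator construction produces anti-Hermitian $X$, and for anti-Hermitian $X$, $i\tr(\rho X)$ is real. Alternatively, and more robustly, I would argue directly: $J(\rho) = i\tr(\rho X(\rho))$, and $J(\rho)^* = -i\tr(\rho^* X(\rho)^*) = -i\tr(\rho^* (-X(\rho^*))) $ — no. Let me just use the transpose: for any operator $M$, $\tr(M) = \tr(M^T)$, so $\tr(\rho_{AB} X(\rho_{AB})) = \tr((\rho_{AB} X(\rho_{AB}))^T) = \tr(X(\rho_{AB})^T \rho_{AB}^T)$. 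I think the cleanest self-contained argument for \textbf{(4)} and \textbf{(5)} together is: (a) oddness under $\rho\to\rho^*$, using $X(\rho^*) = -X(\rho)^T$ for $X\in S_-$, to get $J(\rho^*) = i\tr(\rho^* X(\rho^*)) = -i\tr(\rho^*X(\rho)^T) = -i\tr((\rho X(\rho))^T \cdot (\text{some conjugation}))$; (b) oddness implies vanishing on nonchiral states via Eq.~\eqref{eq:A00} combined with local-unitary invariance (if $\rho\sim\rho^*$ then $J(\rho) = J(\rho^*) = -J(\rho)$, so $J(\rho)=0$); (c) reality then follows because $J$ is a chirality measure mapping to $\mathbb{R}$ by the stated definition, but really I should establish reality independently — I'll show $\overline{J(\rho)} = J(\rho)$ by combining conjugation and transpose invariance of the trace. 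For \textbf{(5)}, the extra wrinkle is that $\{X,Y\}$ with $X\in S_a, Y\in S_b$, $ab=-1$ satisfies $(\{X,Y\})^T = \{X^T,Y^T\}$, so $\{X,Y\}(\rho^*) = ab\{X(\rho)^T,Y(\rho)^T\} = -\{X,Y\}(\rho)^T$, i.e. $\{X,Y\}\in S_-$; but $\{X,Y\}$ may fail to be additive because the anticommutator of sums $\{X_1\otimes\mathbb{1}+\mathbb{1}\otimes X_2, Y_1\otimes\mathbb{1}+\mathbb{1}\otimes Y_2\}$ has cross terms $X_1\otimes Y_2 + Y_1\otimes X_2 + \ldots$ that do not vanish. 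This is precisely why the hypothesis ``$\tr(\rho_{AB}X(\rho_{AB}))$ identically vanishes'' is imposed: inside the trace against $\rho_1\otimes\rho_2$, the offending cross terms contribute $\tr(\rho_1 X_1)\tr(\rho_2 Y_2) + \tr(\rho_1 Y_1)\tr(\rho_2 X_2)$, and the first factor of each (and by a symmetric argument using $\tr(\rho Y)$ — wait, we only assumed $\tr(\rho X)=0$, not $\tr(\rho Y)=0$) — so I expect the cross term $\tr(\rho_1 X_1)\tr(\rho_2 Y_2)$ vanishes because $\tr(\rho_1 X(\rho_1)) \equiv 0$, and the other cross term $\tr(\rho_1 Y_1)\tr(\rho_2 X_2)$ vanishes because $\tr(\rho_2 X(\rho_2))\equiv 0$. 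So $i\tr(\rho_{AB}\{X,Y\})$ \emph{is} effectively additive as a functional even though $\{X,Y\}$ is not additive as an operator. That reconciliation is the subtle point and the one I'd write out carefully.

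The main obstacle I anticipate is bookkeeping the interplay of three operations — Hermitian conjugation (for reality of $J$), transpose (for the $S_\pm$ classification), and complex conjugation (the definition of $\rho^*$) — and not conflating them, since $\rho^* = (\rho^\dagger)^T = (\rho^T)^\dagger$ and $\rho$ is Hermitian so $\rho^* = \rho^T$, which collapses some of these but not all when acting on non-Hermitian $X$. The second subtlety, already flagged, is the additivity argument in \textbf{(5)}: I must explicitly expand the anticommutator on the tensor product, identify the cross terms, and show each is killed by the vanishing-trace hypothesis applied to one tensor factor. Everything else — local-unitary invariance from the form \eqref{j_general}, and ``odd $\Rightarrow$ vanishes on nonchiral states'' — is a short formal deduction using the axioms \eqref{eq:A0}--\eqref{eq:A00} already in place.
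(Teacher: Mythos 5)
Your proposal is correct and follows essentially the same route as the paper's Appendix~\ref{appendix:A} proof: the transpose identities $[X,Y]^T=-[X^T,Y^T]$ and $\{X,Y\}^T=\{X^T,Y^T\}$ give (3) and the $S_-$ membership in (5), additivity of the operators handles (1)--(4), and the cross terms in the tensor-product expansion of the anticommutator are killed by the hypothesis $\tr(\rho_{AB}\,X(\rho_{AB}))\equiv 0$ applied to each factor, exactly as you anticipated. The one step to tighten when writing it up is the reality of $J$ in (4)--(5): since $X$ is a real polynomial in the $K_S$'s, $\overline{X(\rho)}=X(\rho^*)$, whence $\overline{J(\rho)}=-J(\rho^*)$, and the oddness you already established then yields $\overline{J(\rho)}=J(\rho)$.
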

\begin{proof}
    These statements can be directly verified using the above definitions. See appendix~\ref{appendix:A} for more details.
\end{proof}
Note that in (5), the condition $\tr(\rho_{AB} X(\rho_{AB})) \equiv 0$ can be satisfied by $X = [K_{AB}, W(K_A,K_B,K_{AB})]$, where $W$ can be any function. 

Using the proposition, we can construct a number of additive chirality measures.
\begin{corollary}
    Let $X$ be nested commutators of $K_A, K_B$ and $K_{AB}$. If there are in total an odd number of commutators, then $J(\rho_{AB}) = i\tr(\rho_{AB}X)$ is an additive chirality measure. For example,
    \begin{equation}
        J_{3}(\rho_{AB}) = i\tr(\rho_{AB} [[K_{AB},[K_{AB},K_A]],K_B]) \label{j3def}
    \end{equation}
    is an additive chirality measure. 
\end{corollary}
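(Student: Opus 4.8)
The plan is to deduce the corollary from the closure properties (1)--(3) of Proposition~\ref{prop:construction} by a short induction on the number of commutators, and then to invoke statement (4) for the odd case. The only real bookkeeping is to track how the parity of the number of commutators matches the sign in $S_\pm$.

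First I would make the recursive structure of a nested commutator precise: such an $X$ is either one of the generators $K_A,K_B,K_{AB}$ (zero commutators), or an expression $[P,Q]$ where $P$ and $Q$ are themselves nested commutators built from the same generators; if $P$ contains $k_1$ commutators and $Q$ contains $k_2$, then $[P,Q]$ contains $k=k_1+k_2+1$ commutators. I would then prove by induction on $k$ that every nested commutator $X$ built from $K_A,K_B,K_{AB}$ lies in $S_{(-1)^k}$. The base case $k=0$ is exactly statement (1) of Proposition~\ref{prop:construction}, which places each generator in $S_+=S_{(+1)}$. For the inductive step, write $X=[P,Q]$ with $P\in S_{(-1)^{k_1}}$, $Q\in S_{(-1)^{k_2}}$ by the inductive hypothesis; statement (3) then gives $[P,Q]\in S_{-(-1)^{k_1}(-1)^{k_2}}=S_{(-1)^{k_1+k_2+1}}=S_{(-1)^k}$. (Statement (2) is needed only if one wishes to allow real linear combinations of nested commutators all having the same parity; for a single nested commutator it is not used.)

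Given this, the corollary is immediate: if the total number of commutators is odd then $(-1)^k=-1$, so $X\in S_-$, and statement (4) of Proposition~\ref{prop:construction} says $J(\rho_{AB})=i\tr(\rho_{AB}X)$ is an additive chirality measure. For the explicit example I would just count brackets in $[[K_{AB},[K_{AB},K_A]],K_B]$: the innermost $[K_{AB},K_A]$ is one, $[K_{AB},[K_{AB},K_A]]$ is two, and the outer bracket with $K_B$ is three; three is odd, so $J_3$ is an additive chirality measure.

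I do not expect a serious obstacle; the points to be careful about are (a) letting $P$ and $Q$ range over arbitrary nested commutators, not just a left-nested chain, so the induction is exhaustive, and (b) the implicit use that the commutator of two additive maps is again additive --- this is already built into $[X,Y]\in S_{-ab}\subseteq S$ in Proposition~\ref{prop:construction}, and ultimately follows from the vanishing of the cross terms $[X\otimes I,\,I\otimes Y']=0$ when one evaluates on $\rho_{AB}\otimes\rho_{A'B'}$. One could also remark that reality of $J$ in the odd case corresponds to $X$ being anti-Hermitian (a commutator of an odd number of Hermitian operators $K_S$), but this is already contained in the proof of part (4).
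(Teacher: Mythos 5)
Your proposal is correct and follows the same route as the paper, which simply cites statements (1), (3), and (4) of Proposition~\ref{prop:construction}; you have merely made explicit the parity-tracking induction that the paper leaves implicit. The bracket count for $J_3$ and the remarks on additivity of commutators are all consistent with the paper's argument.
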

\begin{proof}
    This follows from (1), (3) and (4).
\end{proof}
\begin{corollary}
$J_2$, defined in \eqref{j2def}, is an additive chirality measure. 
\end{corollary}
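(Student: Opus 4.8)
The plan is to verify that $J_2(\rho_{AB}) = i\tr(\rho_{AB}\{[K_{AB},K_A],K_B\})$ fits the template of item (5) of Proposition~\ref{prop:construction}, so that the result follows immediately. Recall that by item (1), $K_A, K_B, K_{AB} \in S_+$. The strategy is to identify $X := [K_{AB}, K_A]$ and $Y := K_B$ and check the three hypotheses of (5): that $X \in S_a$, $Y \in S_b$ with $ab = -1$, and that $\tr(\rho_{AB}\, X(\rho_{AB})) \equiv 0$.

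First I would compute the parity of $X = [K_{AB},K_A]$. Since $K_{AB} \in S_+$ and $K_A \in S_+$, item (3) with $a=b=+1$ gives $[K_{AB},K_A] \in S_{-ab} = S_{-}$, so $a = -1$. Then $Y = K_B \in S_+$ gives $b = +1$, hence $ab = -1$ as required. Second, I would verify the vanishing condition $\tr(\rho_{AB}[K_{AB},K_A]) \equiv 0$: expanding the trace, $\tr(\rho_{AB} K_{AB} K_A) - \tr(\rho_{AB} K_A K_{AB})$; using that $\rho_{AB}$ and $K_{AB} = -\log\rho_{AB}$ commute and cyclicity of the trace, both terms equal $\tr(\rho_{AB} K_{AB} K_A)$, so the difference is zero. (Equivalently, this is the special case $W = 1$ of the remark following the proposition that $X = [K_{AB}, W(K_A,K_B,K_{AB})]$ always has vanishing expectation.) With all three hypotheses of (5) met, the conclusion is that $J = i\tr(\rho_{AB}\{X,Y\}) = i\tr(\rho_{AB}\{[K_{AB},K_A],K_B\}) = J_2$ is an additive chirality measure.

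There is no serious obstacle here; the only point requiring a little care is making sure the parity bookkeeping in item (3) is applied in the right order (the commutator of two even objects is odd, then anticommuting with another even object in (5) is what restores oddness of the scalar $J$), and confirming that the definition of $S_\pm$ interacts correctly with the transpose/complex-conjugation structure — but all of this is already packaged into Proposition~\ref{prop:construction}, so the proof is essentially a one-line citation of items (1), (3), and (5).
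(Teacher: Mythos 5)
Your proof is correct and follows the same route as the paper: both identify $X=[K_{AB},K_A]\in S_-$ via items (1) and (3), note $\tr(\rho_{AB}[K_{AB},K_A])\equiv 0$, and invoke item (5). Your explicit verification of the vanishing trace via cyclicity and $[\rho_{AB},K_{AB}]=0$ is a welcome extra detail that the paper merely asserts.
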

\begin{proof}
    To see this, note that $[K_{AB},K_A]\in S_{-}$ follows from (1) and (3), and also $\tr(\rho_{AB} [K_{AB},K_A]) \equiv 0$. Then, (5) ensures that $J_2$ is an additive chirality measure.
\end{proof}
\begin{remark}
    Note that for all degree $2$ polynomials $X\in S_{-}$, $\tr(\rho_{AB} X) \equiv 0$. Thus, the above chirality measure has the lowest possible degree in $X$. 
\end{remark}

Nested commutators generate chirality measures with increasing complexity. We can organize some of these measures into a single function which involves modular flow. The modular flow with respect to $\rho_{AB}$ is  a one-parameter family of unitaries $U(s) = \rho^{is}_{AB}$, with $s\in \mathbb{R}$. We may generate a family of operators under modular flow as $K_A(s) = U(s) K_A U^{\dagger}(s)$ and $K_B(s) = U(s) K_B U^{\dagger}(s)$, where $s$ is a real number. 
\begin{corollary}
     $ K^{(+)}_{P}(s) := (K_P(s) + K_P(-s))/2 \in S_{+}$ and $K^{(-)}_{P}(s) = i(K_P(s) -K_P(-s))/2 \in S_{-}$, where $P = A, B$. Furthermore, $\tr(\rho_{AB} K^{(-)}_{P}(s)) \equiv 0$ identically vanishes.
\end{corollary}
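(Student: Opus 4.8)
The plan is to show that the operators $K_P^{(\pm)}(s)$ inherit their membership in $S_\pm$ from the transformation properties of $K_{AB}$ and $K_P$ under complex conjugation, using the key fact that the modular flow $U(s)=\rho_{AB}^{is}$ interacts simply with $\rho\to\rho^*$. The crucial observation is that if $\rho_{AB}^*$ is the complex conjugate in a product basis, then $(\rho_{AB}^{is})^* = (\rho_{AB}^*)^{is}$, and moreover for a Hermitian $\rho_{AB}$ with $K_{AB}=-\log\rho_{AB}$ one has $K_{AB}(\rho_{AB}^*) = K_{AB}(\rho_{AB})^T$ (this is exactly the statement $K_{AB}\in S_+$ from part (1) of Proposition~\ref{prop:construction}). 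Hence $U(s)$ evaluated on $\rho_{AB}^*$ equals $\overline{U(s)} = U(s)^T{}^{\dagger}$-type expression; concretely, $U(s)$ on $\rho^*$ is $(U(s) \text{ on }\rho)^*= \big((U(s))^T\big)^{-1}$ after using unitarity, but the cleanest route is: complex conjugation is an antilinear involution, so conjugating $K_P(s) = U(s)K_P U(s)^\dagger$ in the product basis sends $U(s)\mapsto U(s)^*$, $K_P\mapsto K_P^T$ (since $K_P\in S_+$), and $U(s)^\dagger \mapsto (U(s)^\dagger)^*$.

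First I would record that $U(s)$ acting on $\rho^*_{AB}$ is the operator $V(s) := (\rho_{AB}^*)^{is} = \overline{\rho_{AB}^{is}} = \overline{U(s)}$, where overline denotes entrywise complex conjugation in the product basis; note $\overline{U(s)} = (U(-s))^T$ because $U(s)$ is unitary and $\rho_{AB}$ Hermitian, so $\overline{U(s)} = (U(s)^\dagger)^T = U(-s)^T$. Then I would compute, for $X(\rho_{AB}) = K_P(s) = U(s)K_P U(-s)$,
\begin{equation}
X(\rho_{AB}^*) = V(s)\, K_P^T\, V(-s) = U(-s)^T\, K_P^T\, U(s)^T = \big(U(s)\,K_P\,U(-s)\big)^T = K_P(s)^T .
\end{equation}
Wait — this needs care on the order of transposition; the correct identity is $\overline{U(s)}\,K_P^T\,\overline{U(-s)} = \big(U(-s)\big)^T K_P^T \big(U(s)\big)^T = \big(U(s) K_P U(-s)\big)^T$, using $(ABC)^T = C^T B^T A^T$ and matching $U(\pm s)^T$ appropriately. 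So $K_P(s)$ on $\rho^*$ equals $K_P(s)^T$ — but with $s$ and $-s$ interchanged inside, i.e. $K_P(s)(\rho_{AB}^*) = K_P(-s)(\rho_{AB})^T$. That sign flip in $s$ is precisely why the symmetric combination lands in $S_+$ and the antisymmetric one in $S_-$: plugging into $K_P^{(+)}(s) = \tfrac12(K_P(s)+K_P(-s))$ gives $K_P^{(+)}(s)(\rho^*) = \tfrac12(K_P(-s)^T + K_P(s)^T) = K_P^{(+)}(s)^T$, so $K_P^{(+)}(s)\in S_+$; and $K_P^{(-)}(s) = \tfrac{i}{2}(K_P(s)-K_P(-s))$ gives $K_P^{(-)}(s)(\rho^*) = \tfrac{i}{2}(K_P(-s)^T - K_P(s)^T) = -\tfrac{i}{2}(K_P(s)-K_P(-s))^T = -K_P^{(-)}(s)^T$, so $K_P^{(-)}(s)\in S_-$ (the extra $i$ and the antisymmetry conspire to give the minus sign). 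Additivity of $K_P^{(\pm)}(s)$ under tensor products follows because $K_P$, $K_{AB}$ are additive and $\rho_{AB}^{is}$ factorizes across the tensor product, so $U(s)$ on $\rho_{AB}\otimes\rho_{A'B'}$ is $U_{AB}(s)\otimes U_{A'B'}(s)$, hence $K_P^{(\pm)}(s)\in S$.

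Finally, for the vanishing of $\tr(\rho_{AB}K_P^{(-)}(s))$: this is immediate from cyclicity and $[\rho_{AB},U(s)]=0$. Indeed $\tr(\rho_{AB}K_P(s)) = \tr(\rho_{AB} U(s) K_P U(-s)) = \tr(U(-s)\rho_{AB}U(s)\, K_P) = \tr(\rho_{AB}K_P)$, which is independent of $s$; therefore $\tr(\rho_{AB}K_P(s)) - \tr(\rho_{AB}K_P(-s)) = 0$, giving $\tr(\rho_{AB}K_P^{(-)}(s)) \equiv 0$. I expect the main obstacle to be bookkeeping the transposition/conjugation identity for $\overline{\rho_{AB}^{is}}$ correctly — in particular verifying $\overline{U(s)} = U(-s)^T$ using Hermiticity of $\rho_{AB}$ and that complex conjugation in a product basis is the composition of transpose and adjoint — and making sure the interchange $s\leftrightarrow -s$ is tracked consistently so that the $S_+$ versus $S_-$ assignment (and the role of the explicit factor of $i$ in $K_P^{(-)}$) comes out right. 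Everything else reduces to the algebraic rules (1)--(5) already established in Proposition~\ref{prop:construction}.
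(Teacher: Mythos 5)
Your route is genuinely different from the paper's. The paper expands $K_P(s)$ as the series of nested commutators $K_P + is[K_{AB},K_P]+\tfrac{(is)^2}{2!}[K_{AB},[K_{AB},K_P]]+\cdots$ and reads off the grading term by term from items (1)--(3) of Proposition~\ref{prop:construction}: terms with an even number of commutators have real coefficients and lie in $S_+$, terms with an odd number carry a factor of $i$ and lie in $S_-$, so the symmetric/antisymmetric combinations in $s$ separate the two. You instead compute the conjugation action on $\rho_{AB}^{is}$ directly and in closed form, which has the advantage of avoiding the formal infinite series entirely. Your arguments for additivity and for $\tr(\rho_{AB}K_P^{(-)}(s))\equiv 0$ (cyclicity plus $[\rho_{AB},U(s)]=0$) coincide with the paper's.

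However, the pivotal identity in your computation is misderived. You assert $(\rho_{AB}^*)^{is}=\overline{\rho_{AB}^{is}}$. This is false: for a positive eigenvalue $p$ one has $\overline{p^{is}}=p^{-is}$, so entrywise conjugation of the matrix function reverses the sign of $s$, and the correct statement is $(\rho_{AB}^*)^{is}=\overline{\rho_{AB}^{-is}}=(\rho_{AB}^{is})^{T}$, i.e.\ $V(s)=U(s)^T$ rather than $U(-s)^T$. The discrepancy is not cosmetic: if you follow your stated identity consistently you obtain $K_P(s)(\rho^*)=K_P(s)^T$ with no interchange of $s$ and $-s$, which would place $K_P^{(-)}(s)$ in $S_+$ instead of $S_-$ and falsify the corollary. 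Your ``wait, this needs care'' correction does land on the right final identity $K_P(s)(\rho_{AB}^*)=K_P(-s)(\rho_{AB})^T$, but the chain of equalities you offer in its support still evaluates to $K_P(s)^T$; the $s\to-s$ interchange is asserted, not derived. With $V(s)=U(s)^T$ in hand the step is clean: $V(s)\,K_P^T\,V(s)^\dagger = U(s)^T K_P^T U(-s)^T = \bigl(U(-s)K_P U(s)\bigr)^T = K_P(-s)^T$, and everything downstream of that point in your argument is correct.
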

\begin{proof}
    To see this, simply note that 
    \begin{equation}
    \begin{aligned}
         K_P(s) = \,\, & K_P + is [K_{AB}, K_P] \\
         &+ \frac{(is)^2}{2!} [K_{AB},[K_{AB}, K_P]] + \cdots,
    \end{aligned}      
    \end{equation}
    where $\cdots$ contains nested commutators of higher orders. Thus, $K^{(+)}(s), K^{(-)}(s)$ only contains even and odd numbers of commutators respectively. Hence $ K^{(+)}_{P}(s) \in S_{+}$ and $K^{(-)}_{P}(s) \in S_{-}$. Furthermore, using the cyclic property of the trace, we obtain $\tr(\rho_{AB} K_P(s)) = \tr(\rho_{AB} K_P)$, thus $\tr(\rho_{AB} K^{(-)}_{P}(s)) \equiv 0$ identically vanishes.
\end{proof}
From the modular flow, we can construct a family of additive chirality measures, where two of the simplest examples are as follows.
\begin{corollary}
The following two functions are additive chirality measures,\footnote{Note that switching $A$ and $B$ in the definition of $\gamma_s$ and $\phi_s$ do not give new chirality measures. Taking the same subregion will not generate new measures either because $i\tr(\rho_{AB}[K^{(+)}_{A}(s), K_A])$ vanishes identically.}
    \begin{eqnarray}
        \gamma_s(\rho_{AB}) &:=& i\tr(\rho_{AB}[K^{(+)}_{A}(s), K_B]), \label{measure_1} \\ 
        \phi_s(\rho_{AB}) &:=& i\tr(\rho_{AB}\{K^{(-)}_{A}(s), K_B\}), \label{measure_2}
    \end{eqnarray}
    where $s \in \mathbb{R}$. 
\end{corollary}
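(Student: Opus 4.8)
The plan is to verify that $\gamma_s$ and $\phi_s$ satisfy all the requirements of Proposition~\ref{prop:construction}, namely that they are of the form $i\tr(\rho_{AB} X)$ with $X$ built from elements of $S_\pm$ in a way that lands in the ``odd'' class and hence gives an additive chirality measure. First I would treat $\gamma_s$: by the preceding corollary we have $K_A^{(+)}(s) \in S_+$, and by statement (1) of Proposition~\ref{prop:construction} we have $K_B \in S_+$; then statement (3) with $a=b=+1$ gives $[K_A^{(+)}(s), K_B] \in S_-$. Applying statement (4) then immediately shows $\gamma_s(\rho_{AB}) = i\tr(\rho_{AB} [K_A^{(+)}(s), K_B])$ is an additive chirality measure. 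This handles $\gamma_s$ with essentially no computation beyond bookkeeping of signs.

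Next I would treat $\phi_s$, which is the slightly more delicate one because it uses an anticommutator and therefore relies on statement (5). Here I take $X = K_A^{(-)}(s)$ and $Y = K_B$. From the preceding corollary, $K_A^{(-)}(s) \in S_-$, so $a = -1$; and $K_B \in S_+$, so $b = +1$; hence $ab = -1$ as required by (5). The other hypothesis of (5) is that $\tr(\rho_{AB}\, X(\rho_{AB}))$ vanishes identically — but this is exactly the last sentence of the preceding corollary, which states $\tr(\rho_{AB} K_A^{(-)}(s)) \equiv 0$. Therefore statement (5) applies and $\phi_s(\rho_{AB}) = i\tr(\rho_{AB}\{K_A^{(-)}(s), K_B\})$ is an additive chirality measure.

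Finally I would note that the footnote claims are not really part of the formal statement, but if one wants to address them: the claim that swapping $A\leftrightarrow B$ gives nothing new follows because one could equally run the argument with $K_B^{(\pm)}(s)$ and $K_A$, and the resulting quantities are related by the symmetry of the construction (relabeling); and the claim that taking the same subregion fails follows because $i\tr(\rho_{AB}[K_A^{(+)}(s), K_A]) = i\tr(\rho_{AB}[K_A(s),K_A])$ (the $-s$ term being the adjoint-conjugate) reduces, using $[\rho_{AB}^{is} K_A \rho_{AB}^{-is}, K_A]$ and cyclicity of the trace together with $[\rho_{AB}, \cdot]$ manipulations, to something that vanishes identically — this is the one spot where a short explicit calculation is needed.

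The main obstacle, such as it is, is purely one of careful sign-tracking: making sure the parities $a, b$ are assigned correctly from the previous corollary ($K^{(+)}\in S_+$, $K^{(-)}\in S_-$) and that statement (5)'s trace-vanishing hypothesis is genuinely supplied by the previous corollary rather than needing a fresh argument. There is no deep step here; the content has all been front-loaded into Proposition~\ref{prop:construction} and the modular-flow corollary, and this corollary is essentially an application. If one wanted to be fully self-contained one could also re-derive the trace-vanishing of $K_P^{(-)}(s)$ directly from the cyclicity of the trace, $\tr(\rho_{AB}\rho_{AB}^{is} K_P \rho_{AB}^{-is}) = \tr(\rho_{AB} K_P)$, so that $K_P(s)$ and $K_P(-s)$ have equal expectation values and their difference has zero expectation value.
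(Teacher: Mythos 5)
Your proof is correct and follows exactly the route the paper intends: the paper states this corollary without an explicit proof precisely because it is the immediate application of Proposition~\ref{prop:construction} parts (3)--(4) to $[K_A^{(+)}(s),K_B]$ with $K_A^{(+)}(s),K_B\in S_+$, and part (5) to $\{K_A^{(-)}(s),K_B\}$ using $K_A^{(-)}(s)\in S_-$ and $\tr(\rho_{AB}K_A^{(-)}(s))\equiv 0$ from the preceding corollary. Your sign bookkeeping and the identification of which hypothesis of (5) is supplied by the previous corollary are both right, so nothing further is needed.
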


Note that by taking derivatives with respect to $s$ at $s=0$, one recovers previously constructed chirality measures:
\begin{eqnarray}
\frac{d^2}{ds^2} \gamma_s (\rho_{AB})|_{s=0} &=& -J_3(\rho_{AB}) \\
\frac{d}{ds} \phi_s(\rho_{AB})|_{s=0} &=& - J_2(\rho_{AB}).
\end{eqnarray}

Note that  each of the measures constructed so far in \eqref{j2def}, \eqref{j3def}, \eqref{measure_1}, \eqref{measure_2}, are odd under exchange of $A$ and $B$. It is also straightforward to construct measures without this symmetry property, such as  $J_3'= i\tr(\rho_{AB}[[[K_{AB},K_B],K_{B}],K_{B}])$.~\footnote{One utility of such non-symmetric measures is that they can detect chirality in general ``classical-quantum'' states, which we will introduce in Sec.~\ref{sec:correlations}, in which  $[K_{AB}, K_A]=0$ but $[K_{AB}, K_B]$ is generally non-zero. A particular example of a chiral classical-quantum state is the one in Example~\ref{ex:chiral_deg}, but in this fine-tuned case $J_3'$ is also zero. }

The nested commutator construction also generalizes to multipartite quantum states. For a tripartite state, the simplest one is the modular commutator $J(A,B,C) = i \tr (\rho_{ABC} [K_{AB},K_{BC}])$. One feature of the modular commutator is that it vanishes for tripartite pure states~\cite{Kim2022-J-long,Zou_2022_modular}, and therefore cannot detect chirality in these states. On the other hand, the bipartite chirality measures constructed in this section can detect chirality in tripartite pure states $\ket{\psi_{ABC}}$.  One can trace out any of the three parties and compute the chirality measures, such as $\gamma_s$ or $\phi_s$, for the reduced density matrix of the remaining two parties. The tripartite pure state is chiral if one of these measures does not vanish. 

\subsection{ Chirality and entanglement in random two-qubit mixed states} 

\label{sec:2d}



Let us now use the chirality measures constructed above to detect chirality in the simple setup of random two-qubit mixed states. We randomly sample 50,000 states of the form $\rho_{AB} = U_{AB}D_{AB}U^{\dagger}_{AB}$, where $U_{AB}$ is a Haar-random unitary in $U(4)$, and $D_{AB}$ is a diagonal matrix $(p_1,p_2,p_3,p_4)$ uniformly chosen in $p_i\in(0,1)$ with the constraint $\sum_{i} p_i =1$. In order to contrast chirality with entanglement, we also compute the logarithmic negativity $E_N(\rho_{AB}):= \log \tr ||\rho^{T_B}_{AB}||_1 $ \cite{Vidal_2002}. It is well-known that the nonvanishing of $E_N$ is a necessary and sufficient condition for a two-qubit state to be entangled \cite{Horodecki_1996}. The result is shown in Fig.~\ref{fig:EN}, where we find that $J_2$ is generically non-zero.  
Furthermore, there is no correlation between entanglement and chirality, as many chiral states have negligible $E_N$ and many highly entangled states have negligible $|J_2|$. Indeed, the first example of a chiral state that we constructed in \eqref{chiral_example} is a separable state and hence has $E_N=0$.

\begin{figure}
    \centering
    \includegraphics[width=0.88\linewidth]{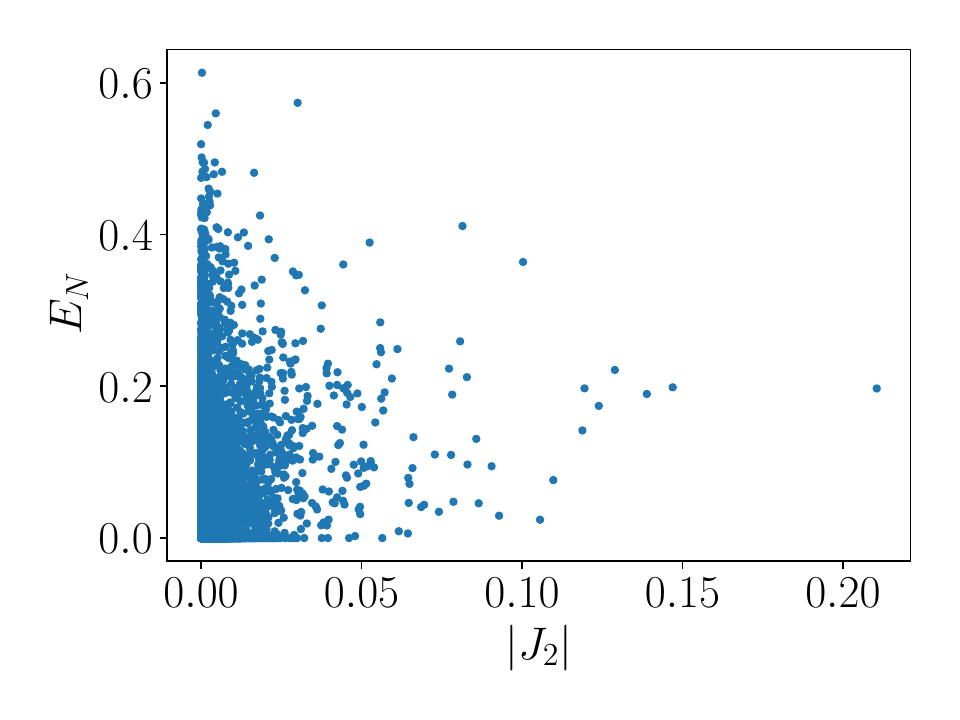}
    \caption{Chirality measure $|J_2|$ versus logarithmic negativity $E_N$ in random two-qubit mixed states. The plot indicates that there is no correlation between entanglement and chirality.}
    \label{fig:EN}
\end{figure}

While chirality is not related  to quantum entanglement, in the rest of the paper, we will show that chirality does give a lower bound on two other kinds of quantum resources, magic and discord-like quantum correlations.



\section{Chirality and magic}\label{sec:magic}
In this section, we relate the notion of chirality that  we defined above to magic, a resource of quantum computation beyond the stabilizer formalism. We first prove that stabilizer states of qubits are nonchiral for any partition of the system. We then prove lower bounds on various measures of magic, including the stabilizer nullity and the  stabilizer fidelity, in terms of the chiral log-distance. 

\subsection{Stabilizer states of qubits are nonchiral}

Stabilizer states are a subset of states of $n$ qubits which are  efficiently classically simulable, even though they can in general be highly entangled~\cite{Gottesman1998}. Let us first review the definition of these states. 

Recall that the 
Pauli group $G_n$ on $n$ qubits consists of all possible  $n$-fold tensor products involving the one-qubit Pauli operators $\{I, X, Y, Z \}$, each accompanied by any of the four phases $\pm 1, \pm i$. 
Pure and mixed stabilizer 
states of $n$ qubits are defined as follows in terms of certain subsets of the Pauli group~\cite{nielsen00,Gottesman1998,Aaronson_2004}:
\begin{enumerate}
\item 
A {\it pure} stabilizer state can be uniquely specified by the property that it is invariant under the action of some subgroup $S$ of $G_n$ generated by $n$ mutually commuting elements of $G_n$. We will refer to the generators of $S$ as $S_1, ..., S_n$. The generators $S_j$ must satisfy the following properties for a unique invariant pure state to exist:  (i) they are independent, in the sense that we cannot obtain any of the $S_j$ as a product of some subset of the $S_\ell$'s for $\ell \neq j$ up to overall phases, and (ii)  $S_j^2= I$  for each $j$ (which means that there is no overall factor of $i$ outside the string of Pauli matrices in $S_j$). 

\item 
To define a {\it mixed} stabilizer state, we can consider a subgroup $S$ of $G_n$ with a set  of $k$ generators $S_j$ for some $k<n$ which again satisfy properties (i) and (ii) from the previous point. Then there is a subspace of dimension $2^{n-k}$ that is invariant under the action of $S$, we will define the mixed stabilizer state corresponding to $S$ as the maximally mixed state on this subspace. 
\end{enumerate} 

In both cases, given the set of generators $\{S_i\}_{i=1}^k$ for $k\leq n$, the associated stabilizer state can be written as 
\begin{equation}
    \rho[\{S_i\}]= \frac{1}{2^{n-k}} \prod_{i=1}^k \frac{I+S_i}{2} \, .  \label{rhosi}
\end{equation}
The generators of $S$ can be specified (up to phases) by two matrices $M_Z, M_X \in \mathbb{F}^{k \times n}_2$ in the following way.  
If the $x$-th site of the Pauli string $S_j$ is $I,X,Z,Y$ (up to phases), then  $(M^{jx}_Z,M^{jx}_X) = (0,0),(0,1),(1,0),(1,1)$, respectively. If we consider the combined $k \times 2n$ matrix $[M_Z \, M_X]$, its $j$-th row therefore gives a description of the generator $S_j$.


\begin{Proposition}
\label{prop:stab}
    Any stabilizer state on $n$ qubits is $n$-partite nonchiral. 
\end{Proposition}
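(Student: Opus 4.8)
The plan is to exhibit an explicit product unitary — in fact a tensor product of single-qubit Pauli operators — that conjugates $\rho$ to $\rho^*$. Since chirality is basis-independent, I will take the product basis used for the complex conjugation to be the computational ($Z$-eigen) basis on each qubit. Write a generator as $S_i = \epsilon_i\, P_{i1}\otimes\cdots\otimes P_{in}$ with $\epsilon_i=\pm1$ (permitted because $S_i^2=I$ by property (ii)) and $P_{ij}\in\{I,X,Y,Z\}$. Entrywise complex conjugation in this basis fixes $I,X,Z$ and sends $Y\mapsto -Y$, and it is multiplicative, $(AB)^*=A^*B^*$; hence $S_i^*=(-1)^{y_i}S_i$ where $y_i$ is the number of sites on which $S_i$ acts as $Y$, and applying $*$ to \eqref{rhosi} gives $\rho^* = \tfrac{1}{2^{n-k}}\prod_i \tfrac{I + (-1)^{y_i}S_i}{2}$.

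Next I would look for a single-qubit Pauli string $P = P_1\otimes\cdots\otimes P_n$ whose conjugation reproduces exactly these signs, i.e.\ $P S_i P^\dagger = (-1)^{y_i} S_i$ for every generator — equivalently, $P$ anticommutes with $S_i$ precisely when $y_i$ is odd. Given such a $P$, conjugating \eqref{rhosi} generator by generator yields $P\rho P^\dagger = \tfrac{1}{2^{n-k}}\prod_i \tfrac{I + (-1)^{y_i}S_i}{2} = \rho^*$, and since $P$ is a tensor product of single-qubit unitaries this shows $\rho$ is $n$-partite nonchiral. (Grouping the single-qubit factors then also gives nonchirality for any coarser partition, so one gets chirality-freeness for \emph{every} partition at once.)

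It remains to produce such a $P$, which I would do by linear algebra over $\mathbb{F}_2$. Identify $P_j$ with $(a_j,b_j)\in\mathbb{F}_2^2$ via $P_j\sim Z^{a_j}X^{b_j}$, and recall that $S_i$ acts on site $j$ as $Z^{(M_Z)_{ij}}X^{(M_X)_{ij}}$ up to phase, with $y_i\equiv\sum_j (M_Z)_{ij}(M_X)_{ij}\pmod 2$. The Pauli commutation relation gives $\langle P,S_i\rangle \equiv \sum_j\big(a_j (M_X)_{ij} + b_j (M_Z)_{ij}\big)\pmod 2$, so the requirement on $P$ becomes the linear system $[M_X\,|\,M_Z]\,\vec v = \vec y$, with $\vec v=(a_1,\dots,a_n,b_1,\dots,b_n)^{T}$ and $y_i=\sum_j (M_Z)_{ij}(M_X)_{ij}$. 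By property (i), the generators are independent up to phases, so the rows of $[M_Z\,|\,M_X]$ are $\mathbb{F}_2$-linearly independent; swapping the two $n$-column blocks preserves rank, hence $[M_X\,|\,M_Z]$ has full row rank $k$ and the map $\vec v\mapsto[M_X\,|\,M_Z]\vec v$ is onto $\mathbb{F}_2^k$. Therefore the system is solvable and a suitable local Pauli $P$ exists.

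The step requiring the most care — rather than a genuine obstacle — is the observation that the sign changes forced by complex conjugation can be undone purely by a \emph{Pauli} conjugation (not by Clifford rotations, which would alter the stabilizer tableau and the subsystem structure), together with the fact that such a Pauli exists exactly because the generators are symplectically independent; once this is seen, the remaining computation is routine. I would also note that the whole argument runs off \eqref{rhosi} uniformly, so it treats pure ($k=n$) and mixed ($k<n$) stabilizer states simultaneously.
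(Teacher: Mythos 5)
Your proof is correct and follows essentially the same route as the paper's: complex conjugation in the computational basis flips the sign of exactly those generators containing an odd number of $Y$'s, and an undoing Pauli string exists because the corresponding $\mathbb{F}_2$-linear system $[M_X\,|\,M_Z]\vec v=\vec y$ is solvable thanks to the full row rank of the check matrix, which follows from the independence of the generators. The only cosmetic differences are your explicit handling of the signs $\epsilon_i$ and the remark that grouping single-qubit factors yields nonchirality for every coarser partition (which the paper defers to Corollary~\ref{31}).
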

\begin{proof}


Under complex conjugation in the computational basis, $Y\rightarrow -Y$ and the other Pauli matrices remain unchanged. Hence, 
\be 
\rho[\{S_i\}]^{\ast} =  \rho[\{(-1)^{n^{(Y)}_i} S_i\}]
\ee
where $n^{(Y)}_i$ is the number of $Y$'s in the generator $S_i$, and  is given by 
\begin{equation}
    n^{(Y)}_i = \sum_{x=1}^n M^{ix}_Z M^{ix}_X \, . \label{24}
\end{equation}

The state $\rho[\{S_i\}]$ is $n$-partite nonchiral if there exists a local unitary $U = \otimes_{x=1}^nU_x$ such that
\begin{equation}
   ~~  U S_i U^{\dagger} = (-1)^{n^{(Y)}_i} S_i \quad \forall i =1, ..., k\, . \label{25}
\end{equation}
Let us consider local unitaries where each $U_x$ is a Pauli matrix, described by binary vectors $v_Z \in \mathbb{F}^{n}_2$ and $v_X \in \mathbb{F}^{n}_2$ as follows: 
$U_x$ is $I,X,Z,Y$ if $(v^{x}_Z,v^{x}_X) = (0,0),(0,1),(1,0),(1,1)$ respectively. Under such unitary transformations, any Pauli string can only change by an overall sign. Explicitly, we have
\begin{equation}
    U S_i U^{\dagger} = \prod_{x=1}^n (-1)^{(v^{x}_X M^{ix}_Z+v^{x}_Z M^{ix}_X)} S_i \label{26}
\end{equation}
The exponent $(v^{x}_X M^{ix}_Z+v^{x}_Z M^{ix}_X)$ mod 2 determines whether the Pauli matrix $U_x$ commutes or anticommutes with the stabilizer $S_i$. 

Combining \eqref{24}, \eqref{25} and \eqref{26}, we see that a stabilizer state is nonchiral if there exist two $n$-dimensional binary vectors $v_Z, v_X$ such that
\begin{equation}
    \forall i,~~\sum_{x=1}^n (v^{x}_X M^{ix}_Z+v^{x}_Z M^{ix}_X) = \sum_{x=1}^n M^{ix}_Z M^{ix}_X ~(\mathrm{mod}~ 2).
\end{equation}
In matrix form, the above equation can be written as  
\begin{equation}
\label{eq:stab}
\begin{bmatrix}
M_Z & M_X
\end{bmatrix}
\begin{bmatrix}
v_X \\ v_Z
\end{bmatrix}
  = b,
\end{equation}
where $b= \mathrm{diag}(M_Z M^{T}_X)$ is the diagonal part of the matrix $M_Z M^{T}_X$. 

Recall that a solution to the linear equations $Mv = b$ always exists if the rows of $M$ are linearly independent. Recall that the $i$'th row of  $[M_Z ~M_X]$, which we can label $r_i$, specifies the $i$-th generator $S_i$. One can check that if the rows are not linearly independent, i.e. if there exists some $i$ such that 
\begin{equation}
r_i = \sum_{j\neq i} a_j r_j \, , 
\end{equation}
for some set of $a_j \in \mathbb{F}_2$, then the corresponding generators satisfy
$S_i = \prod_{j \neq i} S_j^{a_j}$ (up to a possible phase). But this contradicts our requirement that the $k$ generators $S_i$ are independent. Hence, the rows of $M$ must be linearly independent.

The solution to Eq.~\eqref{eq:stab} always exists for any stabilizer state $\rho$. This implies that  there always exists a Pauli string $Q= \otimes_x Q_x \in G_n$ such that $Q \rho Q^{\dagger} = \rho^{*}$. Thus all stabilizer states are $n$-partite nonchiral.
\end{proof}

Recall that if a state is $n$-partite nonchiral, then it is also $(n-1)$-partite nonchiral if one combines two of the subsystems in the partition. We therefore immediately have the following corollary: 
\begin{corollary}
\label{31}
A Pauli stabilizer state is nonchiral with respect to any partition of the qubits.
\end{corollary}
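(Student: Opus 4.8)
The statement to prove is Corollary~\ref{31}: a Pauli stabilizer state is nonchiral with respect to \emph{any} partition of the qubits, not just the partition into individual qubits.

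The plan is to reduce this to Proposition~\ref{prop:stab} using the general coarsening property of non-chirality already noted in the excerpt. Specifically, the key observation is that any partition of the $n$ qubits into subsystems $A_1, \dots, A_m$ (with $m \le n$) is a \emph{coarsening} of the finest partition into $n$ single qubits: each $A_j$ is a union of some collection of individual qubits. First I would invoke Proposition~\ref{prop:stab}, which guarantees that the stabilizer state $\rho$ is $n$-partite nonchiral with respect to the single-qubit partition; concretely, there exists a tensor product of single-qubit Pauli unitaries $Q = \bigotimes_{x=1}^n Q_x$ with $Q \rho Q^\dagger = \rho^*$ (where $\rho^*$ is the complex conjugate in the computational basis, which is a product basis compatible with every coarser partition).

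Next I would observe that this same unitary $Q$ witnesses non-chirality for the coarser partition. For each block $A_j$, define $U_{A_j} := \bigotimes_{x \in A_j} Q_x$, which is a unitary supported on $A_j$. Then $\bigotimes_{j=1}^m U_{A_j} = \bigotimes_{x=1}^n Q_x = Q$, so $\big(\bigotimes_j U_{A_j}\big)\, \rho \,\big(\bigotimes_j U_{A_j}\big)^\dagger = Q \rho Q^\dagger = \rho^*$, where here $\rho^*$ denotes complex conjugation in the product basis $\bigotimes_j \big(\bigotimes_{x \in A_j}$ computational basis$\big)$ — which is literally the same basis as the single-qubit computational basis, just regrouped. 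Hence $\rho$ is $m$-partite nonchiral with respect to $A_1, \dots, A_m$, as claimed. This is exactly the general principle stated just before the corollary in the excerpt ("if a state is $n$-partite nonchiral, then it is also $(n-1)$-partite nonchiral if one combines two of the subsystems"), applied iteratively.

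There is essentially no obstacle here: the only thing to be careful about is that the relevant product basis used for complex conjugation is consistent across the refinement — but since Definition~\ref{chiralitydef} makes chirality basis-independent (any product basis works, as all local unitaries are allowed), and the computational basis is a product basis with respect to every partition into unions of qubits, this is automatic. So the proof is just: apply Proposition~\ref{prop:stab} to get the witnessing Pauli $Q$, then regroup its single-qubit tensor factors according to the blocks of the given partition.
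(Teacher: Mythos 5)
Your proof is correct and follows exactly the paper's route: invoke Proposition~\ref{prop:stab} to obtain the single-qubit Pauli witness $Q=\bigotimes_x Q_x$, then regroup its tensor factors according to the blocks of the coarser partition, which is precisely the "combine subsystems" principle the paper cites just before the corollary. You have merely spelled out the regrouping and the basis-consistency point more explicitly than the paper does.
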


Let us now make  an  observation which will be important for the discussion of the next subsection. Let us specialize to the case of pure stabilizer states, $k=n$. In this case, the   stabilizer group $S$ can be used to define a complete orthonormal basis of the Hilbert space, where each basis state is an eigenstate of each generator $S_j$, and we independently choose the eigenvalues of each $S_j$ to be $+1$ or $-1$. In other words, we can define an orthonormal basis $\{|\psi_{\mathbf{s}}\rangle\}$ with $\mathbf{s} \in \{1, -1\}^{\otimes n}$ such that $S_i |\psi_{\mathbf{s}}\rangle = s_i |\psi_{\mathbf{s}}\rangle$. Since the Pauli string $Q$ sends each $S_i$  to its complex conjugate, it also takes all the basis states $|\psi_{\mathbf{s}}\rangle$ to $|\psi^{*}_{\mathbf{s}}\rangle$ \footnote{Strictly speaking, $Q S_i Q^{\dagger} = S^{*}_i$ only implies that $Q|\psi_{\mathrm{s}} \rangle = e^{i\theta_{\mathrm{s}}}|\psi^{*}_{\mathrm{s}} \rangle$, where $e^{i\theta_{\mathrm{s}}}$ is a phase factor that may depend on $\mathrm{s}$. We can redefine $|\tilde{\psi}_{\mathrm{s}}\rangle  =e^{i\theta_{\mathrm{s}}/2} |\psi_{\mathrm{s}}\rangle$ such that $Q|\tilde{\psi}_{\mathrm{s}} \rangle = |\tilde{\psi}^{*}_{\mathrm{s}} \rangle$. We will therefore ignore the phase factor later on.}. Thus, we have the following corollary.

\begin{corollary}
\label{32}
    Given an orthonormal basis $\{|\psi_{\mathbf{s}}\rangle\}$ specified by a stabilizer group with $n$ generators as defined above, any linear combinations of the basis  states 
    \begin{equation}
    \label{eq:decomp}
        |\psi\rangle = \sum_{{\mathbf{s}}} a_{{\mathbf{s}}} |\psi_{\mathbf{s}}\rangle
    \end{equation}
    with $a_{{\mathbf{s}}} \in \mathbb{R}$ are nonchiral. Furthermore, each $\ket{\psi}$ of the form \eqref{eq:decomp} is related to its complex conjugation $|\psi^{*}\rangle$ by the same unitary transformation
    \begin{equation}
        Q|\psi\rangle = |\psi^{*}\rangle.
    \end{equation}
    for some $Q \in G_n$, where $Q$ can be taken to be any solution to \eqref{eq:stab}. 
\end{corollary}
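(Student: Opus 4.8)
The plan is to show that the single Pauli string $Q\in G_n$ produced in the proof of Proposition~\ref{prop:stab} does all the work: it simultaneously sends every basis vector $|\psi_{\mathbf s}\rangle$ to its complex conjugate (after a harmless choice of phases), and therefore sends every real linear combination of them to its complex conjugate. Concretely, recall from the proof of Proposition~\ref{prop:stab}, specialized to $k=n$, that Eq.~\eqref{eq:stab} is solvable because the rows of $[M_Z\ M_X]$ are linearly independent, and that any solution $(v_X,v_Z)$ determines a Pauli string $Q=\otimes_{x=1}^n Q_x\in G_n$ with $QS_iQ^\dagger=(-1)^{n_i^{(Y)}}S_i=S_i^{\ast}$ for all $i=1,\dots,n$. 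The only structural input needed beyond this is the standard stabilizer fact that, since the $n$ generators $S_i$ are independent, mutually commuting, and square to $I$, their common eigenspaces on the $2^n$-dimensional Hilbert space are all one-dimensional; this is exactly what makes the assignment $\mathbf s\mapsto|\psi_{\mathbf s}\rangle$ well defined up to phases by $S_i|\psi_{\mathbf s}\rangle=s_i|\psi_{\mathbf s}\rangle$.

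Next I would identify $Q|\psi_{\mathbf s}\rangle$ as a phase times $|\psi_{\mathbf s}^{\ast}\rangle$. Complex conjugation is anti-linear and sends $S_i$ to $S_i^{\ast}$, so from $S_i|\psi_{\mathbf s}\rangle=s_i|\psi_{\mathbf s}\rangle$ and $s_i=\pm1\in\mathbb R$ we get $S_i^{\ast}|\psi_{\mathbf s}^{\ast}\rangle=s_i|\psi_{\mathbf s}^{\ast}\rangle$; by one-dimensionality of the joint eigenspaces, $|\psi_{\mathbf s}^{\ast}\rangle$ is the unique (up to phase) joint eigenvector of $\{S_i^{\ast}\}$ with eigenvalues $\{s_i\}$. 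On the other hand, using $S_i^{\ast}=QS_iQ^\dagger$,
\begin{equation}
S_i^{\ast}\,\bigl(Q|\psi_{\mathbf s}\rangle\bigr)=QS_iQ^\dagger Q|\psi_{\mathbf s}\rangle=QS_i|\psi_{\mathbf s}\rangle=s_i\,\bigl(Q|\psi_{\mathbf s}\rangle\bigr),
\end{equation}
so $Q|\psi_{\mathbf s}\rangle$ is also such an eigenvector, whence $Q|\psi_{\mathbf s}\rangle=e^{i\theta_{\mathbf s}}|\psi_{\mathbf s}^{\ast}\rangle$. Absorbing the phase as in the footnote preceding the corollary (replace $|\psi_{\mathbf s}\rangle$ by $e^{-i\theta_{\mathbf s}/2}|\psi_{\mathbf s}\rangle$, which leaves the eigenvalue equations untouched), we may assume $Q|\psi_{\mathbf s}\rangle=|\psi_{\mathbf s}^{\ast}\rangle$ for every $\mathbf s$.

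Finally I would conclude by linearity. For $|\psi\rangle=\sum_{\mathbf s}a_{\mathbf s}|\psi_{\mathbf s}\rangle$ with $a_{\mathbf s}\in\mathbb R$, anti-linearity of complex conjugation gives $|\psi^{\ast}\rangle=\sum_{\mathbf s}a_{\mathbf s}^{\ast}|\psi_{\mathbf s}^{\ast}\rangle=\sum_{\mathbf s}a_{\mathbf s}|\psi_{\mathbf s}^{\ast}\rangle=\sum_{\mathbf s}a_{\mathbf s}\,Q|\psi_{\mathbf s}\rangle=Q|\psi\rangle$. Since $Q=\otimes_x Q_x$ is a product of single-qubit unitaries, this exhibits $|\psi\rangle\sim|\psi^{\ast}\rangle$, so $|\psi\rangle$ is nonchiral for the single-qubit partition, and hence for any coarser partition by the remark following Corollary~\ref{31}; moreover $Q$ was an arbitrary solution of Eq.~\eqref{eq:stab}.

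I expect the only genuinely delicate point to be the bookkeeping of the $\mathbf s$-dependent phases $e^{i\theta_{\mathbf s}}$: one must make sure that fixing these phases to achieve $Q|\psi_{\mathbf s}\rangle=|\psi_{\mathbf s}^{\ast}\rangle$ is compatible with the statement being about the same basis in which the $a_{\mathbf s}$ are taken real, and that different solutions $Q$ of Eq.~\eqref{eq:stab} (which differ by elements of the stabilizer group up to phase) can each be accommodated by such a rephasing. Everything else is a direct consequence of Proposition~\ref{prop:stab} and the one-dimensionality of stabilizer eigenspaces.
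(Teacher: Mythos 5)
Your proof is correct and follows essentially the same route as the paper: take the Pauli string $Q$ from Proposition~\ref{prop:stab} with $QS_iQ^\dagger=S_i^{\ast}$, use uniqueness of the joint eigenvectors to get $Q|\psi_{\mathbf s}\rangle=e^{i\theta_{\mathbf s}}|\psi_{\mathbf s}^{\ast}\rangle$, absorb the phases by rephasing the basis, and conclude by anti-linearity for real coefficients. You spell out the one-dimensionality of the joint eigenspaces (which the paper leaves implicit) and your rephasing factor $e^{-i\theta_{\mathbf s}/2}$ is the correct sign, so nothing further is needed.
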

\begin{remark}
    $Q$ is not unique. In fact, there are $2^n$ solutions to Eq.~\eqref{eq:stab} because the rank of $[M_Z ~M_X]$ is $n$ and the number of unknown variables $v_Z^x$, $v_X^x$  valued in $\mathbb{F}_2$ is $2n$. 
    One observation is that the rows of $[M_Z ~M_X]$ exactly give the nullspace of the $n\times 2n$ matrix $[M_Z ~M_X]$, due to fact that $M_Z M^{T}_X + M_X M^{T}_Z = 0$ (mod 2), which comes from the commutativity of the stabilizers. 
  One can check that this implies that given any local unitary $Q_0$ corresponding to a particular solution to    Eq.~\eqref{eq:stab}, the local unitary $Q = Q_0 \prod_{i=1}^n S_i^{a_i} $ for any $\mathbf{a} \in \mathbb{F}_2^n$ also corresponds to a solution. This gives the complete set of  Pauli strings that can transform a pure stabilizer state to its complex conjugate in the computational basis.
\end{remark}

\subsection{Lower bound on magic from chirality}
Going beyond stabilizer states is one of the crucial steps towards universal quantum computation. Thus, understanding the nature of nonstabilizerness provides insight into quantum advantage. 
In order to characterize the amount of nonstabilizerness, the resource theory of magic has been put forward, in which one specifies stabilizer states as free states and stabilizer protocols, including Clifford operations and measurements in the Pauli basis, as free operations. Quantities that satisfy the axioms of resource theories for this set of free states and free operations are called magic monotones \cite{Veitch_2014,Howard_2017,Beverland_2020,Liu_2022}.

Two measures of magic that we will use below are the  stabilizer nullity $\nu$ and the stabilizer fidelity $F$. $\nu$ is a magic monotone for both pure and mixed states~\cite{Beverland_2020}, and $-\log F$, also known as the min-relative entropy of magic, is a magic monotone for pure states~\cite{Liu_2022}. Let us review the definitions of both quantities for general pure states $\ket{\psi}$ of $n$ qubits: 

{\it Stabilizer nullity:} Let $\tilde{G}_n$  be the set of Pauli strings of $n$ qubits with the overall phases fixed to $+1$.  Given any pure state $|\psi\rangle$, consider the subset  $\tilde S$ of  $\tilde{G}_n$ such that all $P \in \tilde S$ take definite values in the state, i.e.,  $\langle\psi|P|\psi\rangle = \pm 1$. The number of elements in $\tilde S$ is $2^k$ for some integer $k$, as $\tilde S$ is generated by some subset  $S_1, ..., S_k \in \tilde G_n$ which must be mutually commuting in order to simultaneously have definite values in any $\ket{\psi}$.  
 The stabilizer nullity is then defined as $\nu := n-k$. Note that $\nu = 0$ if and only if the state is a stabilizer state.

{\it Stabilizer fidelity}:  The stabilizer fidelity of a pure state  $|\psi\rangle$  is the maximal fidelity between $\ket{\psi}$ and any stabilizer state $|\phi\rangle$, $F := \max_{\phi} |\langle \phi|\psi\rangle|^2$.  

Given  that stabilizer states are non-chiral, it is natural to ask if there is a relation between the amount of chirality and the amount of magic in an arbitrary state. Indeed, as we will show, chirality gives a lower bound to magic in a precise sense. It is worth noting that chirality cannot give an upper bound to magic, since for instance a product of the magic state $|T\rangle = 1/\sqrt{2}(|0\rangle + e^{i\pi/4}|1\rangle)$ among different sites is always  non-chiral.

For a $n$-qubit pure state $|\psi\rangle$, the natural generalization of the chiral log-distance in  Eq.~\eqref{eq:log-distance} to a measure of $n$-partite chirality  is
\begin{equation}
    C(|\psi\rangle) = -\log \max_{\{U_i\}} \left|\left\langle \psi^{*}\left|\otimes_{i} U_i\right|\psi\right\rangle\right|^2.
\end{equation}
where $U_i$ is a local unitary acting on site $i$. Note that $C$ is basis-independent. In order to compare with magic, it is useful to define a basis dependent quantity
\begin{equation}\label{eq:C-Gn}
    C_P(|\psi\rangle) = -\log \max_{P\in G_n} \left|\left\langle \psi^{*}\left|P\right|\psi\right\rangle\right|^2,
\end{equation}
where the local unitary is restricted to a Pauli string in the Pauli group $G_n$ under a prescribed computational basis. Note that based on our discussion in the previous subsection, both quantities vanish for a stabilizer state. We are now ready to state our main result of this section: 

\begin{Proposition}
\label{thm:chiral_bound_magic}
The following inequalities hold for an arbitrary pure state $\ket{\psi}$ of $n$ qubits:

   (1)  $C\leq C_P \leq \nu$\, . 
   
   (2)  $C\leq C_P \leq -2\log F$ \, .
   
Both inequalities are saturated for the case where $\ket{\psi}$ is a pure stabilizer state.
\end{Proposition}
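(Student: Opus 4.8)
The easy parts come first. The inequality $C \le C_P$ is immediate: $\exp(-C)$ is a maximum over \emph{all} local unitaries $\otimes_i U_i$, a set that contains the Pauli strings $P\in G_n$ over which $\exp(-C_P)$ is maximized, so $\exp(-C_P)\le\exp(-C)$. Saturation for a stabilizer state $\ket\psi$ is also immediate: Proposition~\ref{prop:stab} gives $C=C_P=0$, while $\nu=0$ and $F=1$ so $-2\log F=0$, and both chains collapse to $0\le 0\le 0$. Hence everything reduces to the two upper bounds on $C_P$, and each of these, unwound, is the assertion that \emph{some single Pauli} $P\in G_n$ makes $|\braket{\psi^*|P|\psi}|^2$ large: at least $2^{-\nu}$ for bound (1), at least $F^2$ for bound (2). (Here and below $\log=\log_2$, consistent with $\nu\in\mathbb Z$.)

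For $C_P\le\nu$: let $\tilde S=\langle S_1,\dots,S_k\rangle$ with $k=n-\nu$ be the group of Pauli strings taking a definite value in $\ket\psi$; flipping signs we may take $S_i\ket\psi=\ket\psi$, so $\ket\psi$ lies in the stabilizer code $\mathcal C$ of dimension $2^{\nu}$, while $\ket{\psi^*}$ lies in the conjugate code $\mathcal C^*$ stabilized by $\{S_i^*\}$. The linear-algebra argument of Proposition~\ref{prop:stab}, applied to the independent generators $S_1,\dots,S_k$, produces a Pauli $Q\in G_n$ with $QS_iQ^\dagger=S_i^*$ for all $i$; therefore $Q\mathcal C=\mathcal C^*$ and in particular $Q\ket\psi\in\mathcal C^*$. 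Now use the Pauli twirl inside the $2^\nu$-dimensional space $\mathcal C^*$: its $4^\nu$ logical Pauli operators $\bar P$ restrict to a complete orthogonal operator basis on $\mathcal C^*$, so $\sum_{\bar P}|\braket{\psi^*|\bar P|Q\psi}|^2 = 2^\nu$ for any pair of unit vectors in $\mathcal C^*$. Thus some $\bar P$ achieves $|\braket{\psi^*|\bar P Q|\psi}|^2\ge 2^\nu/4^\nu=2^{-\nu}$, and since $\bar PQ\in G_n$ we get $\exp(-C_P)\ge 2^{-\nu}$, i.e.\ $C_P\le\nu$.

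For $C_P\le-2\log F$: take an optimal stabilizer state $\ket\phi$ with $|\braket{\phi|\psi}|^2=F$ and $n$ generators $S_1,\dots,S_n$ normalized so $S_i\ket\phi=\ket\phi$. Build the stabilizer eigenbasis $\{\ket{\psi_{\mathbf s}}\}$ of Corollary~\ref{32}, with phases fixed so that the Pauli $Q$ of Proposition~\ref{prop:stab} obeys $Q\ket{\psi_{\mathbf s}}=\ket{\psi_{\mathbf s}^*}$ and $\ket{\psi_{\mathbf 1}}\propto\ket\phi$. Expanding $\ket\psi=\sum_{\mathbf s}c_{\mathbf s}\ket{\psi_{\mathbf s}}$, every $\ket{\psi_{\mathbf s}}$ is itself a stabilizer state, so $|c_{\mathbf s}|^2=|\braket{\psi_{\mathbf s}|\psi}|^2\le F$ with equality at $\mathbf s=\mathbf 1$. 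A short computation gives, for $g=\prod_i S_i^{a_i}$ and $\ket\chi:=Q^\dagger\ket{\psi^*}=\sum_{\mathbf s}\bar c_{\mathbf s}\ket{\psi_{\mathbf s}}$, that $\braket{\psi^*|Qg|\psi}=\braket{\chi|g|\psi}=\sum_{\mathbf s}c_{\mathbf s}^2\prod_i s_i^{a_i}$. Averaging over $\mathbf a\in\mathbb F_2^n$ and using $2^{-n}\prod_i(1+s_i s_i')=\delta_{\mathbf s,\mathbf s'}$ collapses the square: $2^{-n}\sum_{\mathbf a}|\braket{\psi^*|Qg|\psi}|^2=\sum_{\mathbf s}|c_{\mathbf s}|^4\ge|c_{\mathbf 1}|^4=F^2$. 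So some $g$ attains $|\braket{\psi^*|Qg|\psi}|^2\ge F^2$ with $Qg\in G_n$, giving $\exp(-C_P)\ge F^2$, i.e.\ $C_P\le-2\log F$.

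The real obstacle is exactly these two averaging arguments. The temptation is to use a single Pauli $Q$ (with $Q\ket\phi=\ket{\phi^*}$) plus the triangle inequality, but that only yields $|\braket{\psi^*|Q|\psi}|\ge 2F-1$, which is \emph{weaker} than $F$ for $F<1$ and does not close bound (2); one genuinely needs to average $|\braket{\psi^*|P|\psi}|^2$ over a whole group of Paulis (the logical Paulis of $\mathcal C^*$ in one case, the stabilizer group of $\ket\phi$ in the other) and convert the average into a lower bound on the maximum via Pauli twirl / character orthogonality. A secondary point requiring care is the phase bookkeeping of Corollary~\ref{32}: one must check that the redefinition $\ket{\psi_{\mathbf s}}\to e^{i\theta_{\mathbf s}/2}\ket{\psi_{\mathbf s}}$ making $Q\ket{\psi_{\mathbf s}}=\ket{\psi_{\mathbf s}^*}$ is compatible with keeping $\ket{\psi_{\mathbf 1}}\propto\ket\phi$, and that each $g=\prod_i S_i^{a_i}$ is an honest Hermitian Pauli string acting diagonally in this basis.
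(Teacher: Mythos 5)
Your proof is correct, and the two halves sit differently relative to the paper's argument. For bound (2) you have essentially reproduced the paper's proof: averaging $|\braket{\psi^*|Q g|\psi}|^2$ over the full stabilizer group of the optimal $\ket\phi$ gives $\sum_{\mathbf s}|c_{\mathbf s}|^4$, and your step $\sum_{\mathbf s}|c_{\mathbf s}|^4\ge|c_{\mathbf 1}|^4=F^2$ is the paper's R\'enyi inequality $H_2\le 2H_\infty$ in disguise. For bound (1), however, you take a genuinely different route. The paper extends the $k$ definite-value generators to a complete set $S_1,\dots,S_n$, runs the \emph{same} averaging computation to get $C_P\le H_2$, and then invokes $H_2\le H_0=\nu$ because only $2^\nu$ expansion coefficients are nonzero. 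You instead stay inside the $2^\nu$-dimensional code space $\mathcal C^*$ containing both $\ket{\psi^*}$ and $Q\ket\psi$ and twirl over its $4^\nu$ logical Pauli classes, using $\sum_{\bar P}|\braket{\psi^*|\bar P|Q\psi}|^2=2^\nu$ to extract one Pauli achieving $2^{-\nu}$. The paper's route is more economical in that a single averaging identity ($C_P\le H_2$ for any complete stabilizer eigenbasis) feeds both bounds through two standard R\'enyi inequalities; your route for (1) avoids having to complete the generating set and makes the $2^{-\nu}$ lower bound transparent as a counting statement about the logical operator basis, at the cost of needing the (standard but unproved here) facts that logical operators admit Pauli-string representatives in $G_n$ and restrict to a complete orthogonal operator basis on the code space. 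Your closing observations — that a single-Pauli triangle-inequality argument only yields $2F-1$ and that the phase bookkeeping of Corollary~\ref{32} is harmless — are both accurate and match the paper's footnoted treatment of the phases.
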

\begin{proof}
It is clear that $C\leq C_P$ since a Pauli string is a product of on-site unitaries. 

To relate $C_P$ to measures of non-stabilizerness, it will be useful to consider an arbitrary stabilizer group  generated by $S_1,S_2,\cdots S_n$, and an associated  orthonormal basis $\{|\psi_{\mathbf{s}}\rangle\}$, as defined above Corollary~\ref{32}. Then we can find a Pauli string $Q_0$ such that for each $\mathbf{s}$, $Q_0 |\psi_{\mathbf{s}}\rangle = |\psi^{*}_{\mathbf{s}}\rangle$. As noted in the remark under Corollary~\ref{32}, $Q_0$ is not unique and can be multiplied with any of the $d:=2^n$ stabilizers. 

The maximum of $\left|\left\langle \psi^{*}\left|P\right|\psi\right\rangle\right|^2$ over all $P$ is lower-bounded by the average of this overlap for all Pauli strings in the above set, i.e.,
    \begin{equation}
    \max_{P \in G_n} |\langle \psi^{*}|P|\psi\rangle|^2 \geq \frac{1}{d} \sum_{\mathbf{v} \in \mathbb{F}^n_2} |\langle \psi^{*}|Q_0S^{v_1}_1 S^{v_2}_2 \cdots S^{v_n}_n|\psi\rangle|^2 \label{36}
\end{equation}

Next, note that since $\ket{\psi_{\mathbf{s}}}$ forms an orthonormal basis,  $\ket{\psi}$ can be expanded in this basis as 
\be 
\ket{\psi} = \sum_{\mathbf{s}} a_{\mathbf{s}}\ket{\psi_{\mathbf{s}}}, \quad a_{\mathbf{s}} \in \mathbb{C} \label{37} 
\ee
Now we can expand the right hand side of \eqref{36} in this basis,
\begin{align}
   & ~~~~\frac{1}{d} \sum_{\mathbf{v} \in \mathbb{F}^n_2} |\langle \psi^{*}|Q_0S^{v_1}_1 S^{v_2}_2 \cdots S^{v_n}_n|\psi\rangle|^2 \\
 & =  \frac{1}{d}\sum_{\mathbf{v} \in \mathbb{F}^n_2} \left| \sum_{\mathbf{s},\mathbf{s}'} a_{\mathbf{s}} a_{\mathbf{s}'}  \langle \psi^{*}_{\mathbf{s}'}|Q_0S^{v_1}_1 S^{v_2}_2 \cdots S^{v_n}_n|\psi_{\mathbf{s}}\rangle\right|^2 \\
 & = \frac{1}{d}\sum_{\mathbf{v} \in \mathbb{F}^n_2} \left| \sum_{\mathbf{s},\mathbf{s}'} a_{\mathbf{s}} a_{\mathbf{s}'} s^{v_1}_1 s^{v_2}_2 \cdots s^{v_n}_n  \langle \psi^{*}_{\mathbf{s}'}|Q_0|\psi_{\mathbf{s}}\rangle\right|^2 \label{40} \\
 & = \frac{1}{d}\sum_{\mathbf{v} \in \mathbb{F}^n_2} \left| \sum_{\mathbf{s}} a_{\mathbf{s}}^2  s^{v_1}_1 s^{v_2}_2 \cdots s^{v_n}_n  \right|^2 \label{41} \\
 &= \frac{1}{d}  \sum_{\mathbf{s},\mathbf{s}'} a_{\mathbf{s}}^2 ({a_{\mathbf{s'}}^{\ast}})^2  \prod_{i=1}^n(s_i s_i'+1)    \label{42}\\
 & =  \sum_{\mathbf{s}} |a_{\mathbf{s}}|^4,
\end{align}
where in \eqref{40}, we apply the stabilizers to the basis states to get the eigenvalues; in \eqref{41}, we apply $Q_0$ to the basis states to get the complex conjugated states, and in \eqref{42} we rearrange the sum in a way that gives $\delta_{\mathbf{s}\mathbf{s}'}$ in the summand. 

Thus, we get an upper bound 
\be
C_P \leq H_2
\ee
in terms of the second Renyi entropy $H_2$ of the expansion coefficients,  
\begin{equation}
    H_2(|a_{\mathbf{s}}|^2) : = -\log \sum_{\bf{s}} |a_{\mathbf{s}}|^4.
\end{equation}
Note that the bound holds for any decomposition of the form Eq.~\eqref{37}, where the basis states  $|\psi_{\mathbf{s}}\rangle$ are the simultaneous eigenstates of a complete set of stabilizers. This means that we can take the minimum of $H_2$ among all possible choices of basis (or equivalently all possible choices of generators $S_1, ..., S_n$ of the stabilizer group),
\begin{equation}
    C_P(|\psi\rangle) \leq \min_{\{|\psi_{\mathbf{s}}\rangle\}} H_2(|\langle\psi_{\mathbf{s}}|\psi\rangle|^2).
\end{equation}

Next, we will prove that the right hand side above is upper-bounded both by $\nu$ and by $-2\log F$. We will make use of the Renyi entropy inequalities,
\begin{equation}
    H_2 \leq H_{0}, ~~ H_2 \leq 2H_{\infty},
\end{equation}
where $H_0$ and $H_{\infty}$ are the max entropy and min entropy of a probability distribution, respectively. We have 
\begin{align}
 &H_{0} = \log\left( \text{number of $\mathbf{s}$ such that $|a_{\mathbf{s}}|^2 \neq 0$ } \right), \label{47} \\  &H_{\infty} = -\log \text{max}_{\mathbf{s}} |a_{\mathbf{s}}|^2 \label{48}
\end{align}

To show the inequality (1), suppose the $k$ independent stabilizers that take definite values are  $S_1, S_2, \cdots S_k$. We can supplement this set with $\nu = n-k$ additional generators $S_{k+1},S_{k+2},\cdots S_n$ such that  together, $S_1, ..., S_n$ form a complete set of stabilizers that specifies a basis $|\psi_{\mathbf{s}}\rangle$. In the expansion of $|\psi\rangle$ in this basis, the number of nonzero elements is  $2^\nu$, as  the $n-k$ eigenvalues  $s_{k+1}, ..., s_n$ are not fixed in $\ket{\psi}$, while $s_{1}, ..., s_k$ are fixed.  Thus from \eqref{47},  $H_0 = \nu$ for this choice of $\{\ket{\psi_{\mathbf{s}}}\}$,  and consequently 
\be C_P \leq H_2 \leq H_0 = \nu \, .
\ee

To show the inequality (2), we consider the stabilizers $S'_1, S'_2, \cdots S'_n$ which specify the state $|\phi\rangle$ that maximizes the fidelity $F = |\langle \phi|\psi\rangle|^2$. This set of stabilizers also specify a complete basis $|\psi'_{\mathbf{s}}\rangle$. For  the expansion of $|\psi\rangle$ in this basis, the min entropy is 
\be H_{\infty} = -\log \max_{\mathbf{s}} |\langle \psi'_{\mathbf{s}}|\psi\rangle|^2 = -\log F \, .
\ee
Thus, we have 
\be
C_P \leq H_2 \leq 2H_{\infty} = -2\log F \, .
\ee

Furthermore, for a stabilizer state, $C=C_P = \nu = -2\log F = 0$, hence the inequalities are saturated.
\end{proof}

We make several remarks. Firstly, Proposition~\ref{thm:chiral_bound_magic} can be viewed as the ``robust" version of Proposition~\ref{prop:stab}. It implies that if a state is slightly magical, then it cannot be too chiral. 
Secondly, while the computation of $C$ requires an optimization over continuous variables, the computation of $C_P$ or $F$ only involves optimization over a discrete (but exponentially large) set. 
Thirdly, Proposition~\ref{thm:chiral_bound_magic} applies to generic pure states of qubits, including physical systems that describe chiral or nonchiral phases of matter. It would therefore be   interesting to study the amount of chirality and the amount of magic in the physical systems to see how tight the bound is.
Lastly, the bound can be used to derive bounds of other notions of magic from chirality, such as the number of $T$ gates needed to prepare the state in addition to Clifford operations. The applications to many-body physics will be explored in future work.



\section{Chirality and quantum correlations}
\label{sec:correlations}
In this section, we relate chirality to another resource known as ``discord-like quantum correlations.'' 
We will  review this notion of quantum correlations in Sec.~\ref{sec:4a}, discuss the relation of both chirality and discord-like correlations to non-commutativity of density matrices in Sec.~\ref{sec:4b}, and establish a lower-bound relating a particular chirality measure to a measure of discord-like correlations in Sec.~\ref{sec:4c}
and~\ref{sec:4d}.

\subsection{Review of discord-like quantum correlations}
\label{sec:4a}
Recall that a separable or unentangled mixed state between two parties $A$ and $B$ takes the following form: 
\be 
\rho_{AB}^{\rm (sep)} = \sum_{i} p_i \, \rho_A^{(i)} \otimes \rho_B^{(i)} \label{sep}
\ee
for some probability distribution $\{p_i\}$ and some density matrices $\rho_A^{(i)}$,  $\rho_B^{(i)}$. 
Such states are often seen as being ``classically correlated,'' in the sense that they can be prepared using only local operations and  classical communications between $A$ and $B$, with no need for EPR pairs between $A$ and $B$ in the initial state. However, Ref.~\cite{Ollivier2001} introduced a different notion of the distinction between classical and quantum correlations.  According to their definition, $\rho_{AB}$ is classically correlated with respect to $A$ if there exists a local complete projective measurement  in $A$ with some rank-1 measurement operators $\{\Pi^{(n)}_A\}_{n=1}^{d_A}$  such that the state $\rho_{AB}$ is unchanged by the measurement from the perspective of an observer without access to the measurement outcome $n$, i.e., 
\be 
\sum_{n=1}^{d_A}  \Pi^{(n)}_A \rho_{AB} \Pi^{(n)}_A  = \rho_{AB} \, . \label{discord} 
\ee
A state satisfying the condition \eqref{discord} can always be written in the form     \begin{equation}
    \label{eq:cq}
    \rho_{AB} = \sum_i p_i |i\rangle\langle i|_A \otimes \rho^{(i)}_{B}.
\end{equation}
for an orthonormal basis $\{|i\rangle_{A}\}$ and an arbitrary set of states $\rho^{(i)}_{B}$. Such states are referred to as {\it classical-quantum states}, or classically correlated states with respect to $A$. Any state which cannot be written in the form \eqref{eq:cq} is said to have discord-like quantum correlations with respect to $A$, even if it is an unentangled state as in \eqref{sep}. 
A further subset of \eqref{eq:cq} 
 \begin{equation}
    \rho_{AB} = \sum_{i,j} p_{ij}\,  |i\rangle\langle i|_{A} \otimes |j\rangle \langle j|_{B}, \label{classical_state}
\end{equation}
is said to be classically correlated with respect to both $A$ and $B$, where $\{|j\rangle_B\}$ is an orthonormal basis of $B$. 

Discord-like quantum correlations have been quantified using a variety of measures, such as the quantum discord defined in~\cite{Ollivier2001}, and the interferometric power (IP) defined in Refs.~\cite{Girolami2013,Girolami2014}. Moreover, a resource-theoretic formulation has been developed for such correlations where the free states are the classical-quantum states in \eqref{eq:cq}, and the free operations are arbitrary local channels in $B$ and local commutativity-preserving operations (LCPO) in $A$. See~\cite{discordreview} for a review. 

In this section, we will discuss the relation between chirality and discord-like quantum correlations. The key underlying idea is that the computable chirality measures that we introduced in Sec.~\ref{sec:2c} are non-zero only if at least one of the commutators $[\rho_{AB}, \rho_A]$ or $[\rho_{AB}, \rho_B]$ is non-zero. The non-vanishing of such commutators is also related to discord-like quantum correlations, in ways that we will make precise in the next two subsections. In Section~\ref{sec:4b}, we will show that for generic states, the vanishing of $[\rho_{AB}, \rho_A]$ implies that $\rho_{AB}$ is a classical-quantum state, and use this as an intermediate step in showing that  the vanishing of $[\rho_{AB}, \rho_A]$  implies that $\rho_{AB}$ is bipartite nonchiral  under certain further conditions.  In Sec.~\ref{sec:4c}, we will introduce a new measure of discord-like correlations called the intrinsic interferometric power, and in Sec.~\ref{sec:4d}, we  will show that this quantity is lower-bounded by the chirality measure 
$\gamma_s$ introduced in \eqref{measure_1}.

However, it is important to  note that the interplay between chirality and discord-like quantum correlations is subtle, and is not fully captured by this inequality. In particular, while states that are classical with respect to both $A$ and $B$ as in~\eqref{classical_state} are clearly non-chiral, general  classical-quantum states from~\eqref{eq:cq} are not always non-chiral. 
Returning to our first example of a chiral quantum state in \eqref{chiral_example}, we can see that it is an example of a classical state with respect to $A$. This fact is still consistent with the inequality we will derive, as the chirality of the state~\eqref{chiral_example} is not detected by the measure $\gamma_s$.

\subsection{Chirality and noncommutativity}
\label{sec:4b}



As we have shown in Sec.~\ref{sec:general}, a large class of chirality measures are given by nested commutators. Thus, in order for a chiral state to be detected by these chirality measures, noncommutativity between the density matrix $\rho_{AB}$ and its marginals $\rho_A$ and $\rho_B$ is essential. This raises the following question: does chirality always require noncommutativity of the density matrices?
Equivalently, is the state always nonchiral if the commutativity conditions $[\rho_{AB},\rho_A] = 0$ and $[\rho_{AB},\rho_B] = 0$ hold? Note that a stabilizer state satisfies the commutativity condition but is nonchiral. In the following, we will show that the answer to the question is yes given some additional constraints. We will need the following lemma.
\begin{lemma}
\label{lem:cq}
    For any bipartite state $\rho_{AB}$, if $\rho_{A}$ is nondegenerate and $[\rho_{AB},\rho_{A}] = 0$, then $\rho_{AB}$ is a classical-quantum state.
\end{lemma}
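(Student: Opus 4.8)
The plan is to exploit the fact that when $\rho_A$ is nondegenerate, the condition $[\rho_{AB},\rho_A]=0$ forces $\rho_{AB}$ to be block-diagonal in the eigenbasis of $\rho_A$, and that each block lives entirely on one eigenvalue of $\rho_A$, hence reduces on $A$ to a rank-one projector. First I would write the spectral decomposition $\rho_A = \sum_i p_i \ket{i}\bra{i}_A$ with all $p_i$ distinct. The operator $\rho_A \otimes I_B$ then has eigenprojections $\ket{i}\bra{i}_A \otimes I_B$, and the commutation relation $[\rho_{AB},\,\rho_A\otimes I_B]=0$ is equivalent to $\rho_{AB}$ commuting with every $\ket{i}\bra{i}_A \otimes I_B$. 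This immediately gives the block structure
\begin{equation}
    \rho_{AB} = \sum_{i,j} \ket{i}\bra{i}_A \, \rho_{AB} \, \ket{j}\bra{j}_A = \sum_i \ket{i}\bra{i}_A \otimes \sigma_B^{(i)},
\end{equation}
where $\sigma_B^{(i)} := \bra{i}_A \rho_{AB} \ket{i}_A$ is a positive (unnormalized) operator on $B$, and the off-diagonal terms $i\neq j$ vanish because $\ket{i}\bra{i}_A (\rho_A\otimes I_B) = p_i \ket{i}\bra{i}_A$ while $(\rho_A\otimes I_B)\ket{j}\bra{j}_A = p_j\ket{j}\bra{j}_A$, and $p_i\neq p_j$ forces the corresponding matrix element to be zero.

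Once this block-diagonal form is established, I would set $p_i := \Tr \sigma_B^{(i)}$ and $\rho_B^{(i)} := \sigma_B^{(i)}/p_i$ (discarding any $i$ with $p_i=0$), so that
\begin{equation}
    \rho_{AB} = \sum_i p_i \, \ket{i}\bra{i}_A \otimes \rho_B^{(i)},
\end{equation}
which is exactly the classical-quantum form \eqref{eq:cq}. Consistency with $\Tr \rho_{AB}=1$ gives $\sum_i p_i = 1$, and positivity of $\rho_{AB}$ restricted to the $i$-th block gives $\sigma_B^{(i)} \geq 0$, hence $\rho_B^{(i)}$ is a genuine density matrix. One also checks $\rho_A = \mathrm{Tr}_B\, \rho_{AB} = \sum_i p_i \ket{i}\bra{i}_A$, consistent with the original spectral data.

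The only subtlety — and the place where nondegeneracy is genuinely used — is the vanishing of the off-diagonal blocks: if $\rho_A$ had a repeated eigenvalue, say $p_i = p_j$ for $i\neq j$, then $\ket{i}\bra{j}_A \otimes (\text{anything})$ would also commute with $\rho_A \otimes I_B$, and $\rho_{AB}$ could carry genuine coherence between $\ket{i}_A$ and $\ket{j}_A$, so it need not be classical-quantum (indeed Example~\ref{ex:chiral_deg} shows a chiral, hence certainly non-classical-with-respect-to-$B$, state arises precisely in the degenerate-looking direction). So the main ``obstacle'' is really just being careful that the argument leans on the distinctness of the $p_i$; the rest is the routine bookkeeping of block decomposition and normalization sketched above.
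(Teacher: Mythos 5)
Your proof is correct and follows essentially the same route as the paper's: expand $\rho_{AB}$ in the eigenbasis of $\rho_A$, note that $[\rho_{AB},\rho_A]=0$ forces each off-diagonal block to satisfy $(p_i-p_j)\,|i\rangle\langle j|\otimes\sigma_{ij}=0$ and hence vanish by nondegeneracy, and then normalize the diagonal blocks. (The paper additionally gives a one-line version via the equivalence between classical-quantum states and invariance under a rank-1 projective measurement on $A$; also, your closing parenthetical misattributes the degenerate counterexample --- the relevant fine-tuned case in the paper is Example~\ref{ex:chiral_deg_2}, where the degenerate marginal is $\rho_B=I/2$, not Example~\ref{ex:chiral_deg}.)
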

\begin{proof}

Recall that the existence of a rank-1 projective measurement in $A$ that leaves the state invariant (eq. \eqref{discord}) is equivalent to $\rho_{AB}$ being a classical-quantum state~\eqref{eq:cq}. If $[\rho_{AB}, \rho_A]=0$ and $\rho_A$ is non-degenerate, the eigenstates of $\rho_A$ provide such a measurement basis. 

Alternatively, we can directly verify the above statement as follows. 
Let
    \begin{equation}
        \rho_A = \sum_{i} p_i |i\rangle\langle i|
    \end{equation}
    be the eigenvalue decomposition of $\rho_{A}$, where $p_i$'s are non-degenerate. We can expand $\rho_{AB}$ in this basis
    \begin{equation}
        \rho_{AB} = \sum_{i,j} |i\rangle \langle j| \otimes \sigma_{ij}
    \end{equation}
    Now we can compute
    \begin{equation}
        [\rho_{AB},\rho_{A}] = \sum_{i,j} (p_i-p_j) |i\rangle \langle j| \otimes \sigma_{ij}
    \end{equation}
    If the commutator vanishes, it implies each term in the above sum vanishes. Due to the nondegeneracy of $p_i$'s, the cross term vanishes, thus,
    \begin{equation}
        \sigma_{ij} = p_i\rho_{i} \delta_{ij}.
    \end{equation}
    Thus,
    \begin{equation}
        \rho_{AB} = \sum_{i} p_i |i\rangle \langle i| \otimes \rho_i
    \end{equation}
    is a classical-quantum state.
\end{proof}

Next, we prove that chirality requires noncommutativity of either $\rho_{AB}$ and $\rho_A$ or $\rho_{AB}$ and $\rho_B$, given some additional conditions.
\begin{Proposition}
\label{thm:sufficiency}
    For a bipartite state $\rho_{AB}$, if $[\rho_{AB}, \rho_{A}] = 0$ and $[\rho_{AB},\rho_{B}] = 0$, then $\rho_{AB}$ is a nonchiral state if given one of the following three conditions: \\
    (1) Both $\rho_A$ and $\rho_{B}$ are nondegenerate. \\
    (2) $\rho_A$ is nondegenerate and $A$ is a qubit. \\
    (3) $\rho_{AB}$ is a two-qubit state.
\end{Proposition}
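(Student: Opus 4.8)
The plan is to reduce all three cases to a single elementary fact: $\rho_{AB}$ is nonchiral as soon as it is local‑unitary equivalent to a state all of whose matrix elements are real in \emph{some} product basis. Indeed, by the basis‑independence of Definition~\ref{chiralitydef} we may take that product basis to define the complex conjugation, and then $\rho^{\ast}=\rho=(\Id\otimes\Id)\,\rho\,(\Id\otimes\Id)^{\dagger}$, so $\rho$ is nonchiral. Two convenient ``real forms'' will appear: \emph{(a)} states diagonal in a product basis, i.e.\ classical states as in \eqref{classical_state}; and \emph{(b)}, for two qubits, Bell‑diagonal states $\frac14\big(\Id\otimes\Id+\sum_{k}t_k\,\sigma_k\otimes\sigma_k\big)$ with $t_k\in\mathbb{R}$, which are real in the computational basis because $(\sigma_y\otimes\sigma_y)^{\ast}=\sigma_y\otimes\sigma_y$. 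It therefore suffices, under each hypothesis, to bring $\rho_{AB}$ to form (a) or (b) by local unitaries.

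For condition (1): I would first apply Lemma~\ref{lem:cq} (using $[\rho_{AB},\rho_A]=0$ and nondegeneracy of $\rho_A$) to write $\rho_{AB}=\sum_i p_i\,|i\rangle\langle i|_A\otimes\rho_B^{(i)}$. Then $[\rho_{AB},\rho_B]=0$ gives $[\rho_B^{(i)},\rho_B]=0$ for every $i$, and since $\rho_B$ is nondegenerate each $\rho_B^{(i)}$ is diagonal in the eigenbasis $\{|j\rangle_B\}$ of $\rho_B$; hence $\rho_{AB}$ is diagonal in the product basis $\{|i\rangle_A\otimes|j\rangle_B\}$, which is form (a). For condition (2): the same first step (now with $A$ a qubit) gives $\rho_{AB}=p_0\,|0\rangle\langle 0|_A\otimes\rho_B^{(0)}+p_1\,|1\rangle\langle 1|_A\otimes\rho_B^{(1)}$; if $p_0 p_1=0$ this is a product state and hence nonchiral, so assume $p_0,p_1>0$. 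Here $\rho_B=p_0\rho_B^{(0)}+p_1\rho_B^{(1)}$, and $[\rho_{AB},\rho_B]=0$ forces $[\rho_B^{(0)},\rho_B]=0$, which since $[\rho_B^{(0)},\rho_B]=p_1[\rho_B^{(0)},\rho_B^{(1)}]$ yields $[\rho_B^{(0)},\rho_B^{(1)}]=0$. Thus $\rho_B^{(0)},\rho_B^{(1)}$ are simultaneously diagonalizable on $B$ and again $\rho_{AB}$ is diagonal in a product basis. Being a qubit is essential here: one gets $[\rho_B^{(0)},\rho_B^{(1)}]=0$ from $[\rho_B^{(0)},\rho_B]=0$ only because the sum defining $\rho_B$ has just two terms; for $d_A\ge 3$ the corresponding relation is a sum of several commutators that need not vanish termwise, which is exactly why condition (1) must also require $\rho_B$ nondegenerate.

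For condition (3): since a qubit density matrix is either nondegenerate or maximally mixed, I would split into subcases. If $\rho_A$ is nondegenerate, condition (2) applies (because $A$ is a qubit), so $\rho_{AB}$ is nonchiral; if $\rho_B$ is nondegenerate, condition (2) applies with $A$ and $B$ exchanged. The only remaining possibility is $\rho_A=\rho_B=\Id/2$ --- in which case the commutator hypotheses hold automatically and carry no information. Here I would expand $\rho_{AB}=\frac14\big(\Id\otimes\Id+\sum_{i,j}T_{ij}\,\sigma_i\otimes\sigma_j\big)$; maximally mixed marginals kill all single‑Pauli terms, so $T\in\mathbb{R}^{3\times 3}$. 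Local unitaries act on $T$ by $T\mapsto R_A T R_B^{\mathrm{T}}$ with $R_A,R_B\in SO(3)$, so a singular value decomposition (absorbing a determinant sign into one singular value if necessary) brings $T$ to real diagonal form; thus $\rho_{AB}$ is local‑unitary equivalent to a Bell‑diagonal state, which is form (b). This completes all three cases.

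The step I expect to be the main obstacle is precisely this last subcase of (3): once both marginals are maximally mixed, the commutation hypotheses become vacuous, so one genuinely needs the Pauli normal‑form argument --- the real correlation matrix $T$, its SVD under $SO(3)\times SO(3)$, and the observation that complex conjugation in the computational basis fixes $\sigma_y\otimes\sigma_y$. The rest of the proof is the bookkeeping around Lemma~\ref{lem:cq}, elementary commutator manipulations, and the trivial remark that classical states and Bell‑diagonal states are manifestly nonchiral.
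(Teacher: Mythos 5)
Your proof is correct. Conditions (1) and (2) follow the same route as the paper: Lemma~\ref{lem:cq} reduces to a classical--quantum state, and the second commutation hypothesis forces the $\rho_B^{(i)}$ to be simultaneously diagonalizable (via nondegeneracy of $\rho_B$ in case (1), via the two-term identity $[\rho_B^{(0)},\rho_B]=p_1[\rho_B^{(0)},\rho_B^{(1)}]$ in case (2)), yielding a state of the form \eqref{classical_state}, which is manifestly nonchiral; your remark about why $d_A\ge 3$ breaks the termwise argument is exactly the right diagnosis. Where you genuinely diverge is the residual subcase of (3) with $\rho_A=\rho_B=\Id/2$. The paper invokes Makhlin's complete set of $18$ local-unitary invariants, observes that only $\det\beta$, $\tr(\beta^T\beta)$, $\tr((\beta^T\beta)^2)$ survive for maximally mixed marginals, and checks they are preserved under $\beta\mapsto O\beta O^T$ with $O=\mathrm{diag}(1,-1,1)$. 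You instead use the $SO(3)\times SO(3)$ action on the correlation matrix $T$ and its (sign-adjusted) singular value decomposition to bring the state to Bell-diagonal form, which is real in the computational basis because $(\sigma_y\otimes\sigma_y)^{*}=\sigma_y\otimes\sigma_y$. Both are valid; your version is more self-contained (it does not rely on the completeness of the Makhlin invariants, only on the standard $SU(2)\to SO(3)$ covering and the SVD) and it produces the local unitary explicitly, fitting your unified ``real product-basis form'' framing, whereas the paper's argument is shorter modulo the cited classification result. Your overarching reduction --- local-unitary equivalence to a real matrix in some product basis implies nonchirality --- is also correct and cleanly justified by the basis-independence of Definition~\ref{chiralitydef}.
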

\begin{proof}

\emph{About condition (1):}
Now suppose $\rho_{A}$ and $\rho_{B}$ are both nondegenerate. Then by Lemma~\ref{lem:cq}, the state $\rho_{AB}$ is a classical-quantum state and $\rho_{B} = \sum_{i} p_i \rho_i$. We can use $[\rho_{AB},\rho_B]=0$, which implies that
\begin{equation}
\label{eq:rhoBcommute}
    [\rho_B,\rho_i] =0,
\end{equation}
which indicates that $\rho_B$ and $\rho_i$ are simultaneously diagonalizable. If $\rho_B$ is nondegenerate, then the eigenvalue decomposition is \emph{unique}. Thus $\rho_i$ can be diagonalized in the \emph{same} basis,
\begin{equation}
    \rho_{i} = \sum_{j} q_{ij} |j\rangle\langle j|.
\end{equation}
This implies that the state $\rho_{AB}$ is a classical ensemble,
\begin{equation}
\label{eq:purely_classical}
    \rho_{AB} = \sum_{i,j} p_i q_{ij} |i\rangle \langle i| \otimes |j\rangle \langle j|
\end{equation}
which is nonchiral. This completes the first part of the proof. 

\emph{About condition (2):} Now suppose $A$ is a qubit, and $\rho_{A}$ is nondegenerate.
We can use Lemma~\ref{lem:cq} to restrict $\rho_{AB}$ to a classical-quantum state. Since $A$ is a qubit, we have $\rho_{B} = p_0 \rho_0 + p_1 \rho_1$, where $0< p_0, p_1 <1$. Then Eq.~\eqref{eq:rhoBcommute} implies that
\begin{equation}
    [\rho_0,\rho_1] = 0.
\end{equation}
Thus again they are simultaneously diagonalizable and the state is of the form Eq.~\eqref{eq:purely_classical}. 
The same argument applies to the case that $\rho_{B}$ is nondegenerate. 

\emph{About condition (3):} Now we consider the case $A,B$ are two qubits. The only remaining case is that both $\rho_{A}$ and $\rho_{B}$ are degenerate, which in the qubit case reduces to the conditions that $\rho_A = I/2, \rho_B = I/2$. We can then expand 
\begin{equation}
    \rho_{AB} = \frac{1}{4}\mathbb{I} + \sum_{i,j = 1}^{3}\beta_{ij} \sigma^{i}\otimes \sigma^{j},
\end{equation}
where $\sigma^{i}$'s are Pauli matrices. Now we use the result of Ref.~\cite{Makhlin2002}, which asserts that two two-qubit mixed states can be connected by local unitaries if and only if all the 18 invariants defined in Ref.~\cite{Makhlin2002} are equal. If the two marginals are maximally mixed, there are only 3 nonvanishing invariants among the 18, which are
\begin{equation}
    I_1 = \det(\beta), I_2 = \tr(\beta^{T} \beta), I_3 = \tr((\beta^{T}\beta)^2).
\end{equation}
Under complex conjugation, $\beta$ transforms by an orthogonal matrix,
\begin{equation}
    \beta \rightarrow O\beta O^{T}, ~~O = \mathrm{diag}(1,-1,1).
\end{equation}
Thus all the three invariants do not change. The result of Ref.~\cite{Makhlin2002} then implies that $\rho$ and $\rho^{*}$ are related by local unitaries. Thus $\rho_{AB}$ is nonchiral. 

\end{proof} 
Going beyond two qubits, however, it is possible to construct a chiral state which satisfies the commutativity condition. 

\begin{example}
\label{ex:chiral_deg_2}
The following qudit-qubit state $\rho_{AB}$ satisfies $[\rho_{AB},\rho_A] = 0$ and $[\rho_{AB},\rho_B] = 0$ and is chiral. Let $\mathcal{H}_A$ be $4$-dimensional and $B$ be a qubit.
    \begin{equation}
    \label{eq:chiral_ex_fine_tuned}
        \rho_{AB} = \sum_{i=1}^3 p_i |i\rangle_A\langle i| \otimes |\psi_i\rangle_B\langle \psi_i| + p_4 |4\rangle\langle 4| \otimes \rho_4
    \end{equation}
    where $|\psi_i\rangle$'s are identical to those in Eq~\eqref{eq:ex_chiral} and the state $\rho_4$ satisfies
    \begin{equation}
    \label{eq:nonchiral_deg}
        \sum_{i=1}^3 p_i |\psi_i\rangle\langle \psi_i| + p_4\rho_4 = I/2
    \end{equation}
    and $p_i$'s are nondegenerate and nonzero. Note that the solution of $\rho_4$ to Eq.~\eqref{eq:nonchiral_deg} can always be found if $p_1,p_2,p_3$ are sufficiently small. 
\end{example}
$\rho_{AB}$ is chiral by the same argument given for the example in Eq.~\eqref{chiral_example}. It is worth noting that this counterexample requires the degeneracy of $\rho_B$. This is a consequence of Proposition~\ref{thm:sufficiency}, which indicates that such counterexamples have to be fine-tuned. The chirality of this state cannot be detected by any of the nested commutator ansatz. Nevertheless, it is still a chiral state by definition and can be detected by the chiral log-distance $C_{A|B}(\rho)>0$.

To summarize, given a bipartite quantum state $\rho_{AB}$ with nondegenerate $\rho_{A}$ and $\rho_{B}$, a chiral state requires noncommutativity, that is $[\rho_{AB},\rho_A] \neq 0$ or $[\rho_{AB},\rho_B]\neq 0$. Thus, state $\rho_{AB}$ changes under a local measurement in the basis of $\rho_A$ or $\rho_{B}$. This implies that a chiral state generically has discord-like quantum correlations. We will make this statement more quantitative in the next subsection.



\subsection{Intrinsic interferometric power as a measure of quantum correlations }
\label{sec:4c}



In this section, we will motivate and define a new measure of discord-like quantum correlations involving a certain quantum Fisher information, closely related to the interferometric power. Unlike the interferometric power, we will define this quantity in terms of intrinsic properties of the state rather than with reference to some externally given Hamiltonian. We will show that this measure of quantum correlations is lower-bounded by the chirality measure in~\eqref{measure_1}, recalled here for clarity
\begin{equation}\label{eq:measure_1-copy}
    \gamma_s(\rho_{AB}) = i\tr(\rho_{AB}[K^{(+)}_{A}(s), K_B]),
\end{equation}
where $K^{(+)}_A(s) = (\rho_{AB}^{is} K_A \rho_{AB}^{-is} + \rho_{AB}^{-is} K_A \rho_{AB}^{is})/2$.  
While this does not guarantee that all chiral states have quantum correlations (notably, examples of chiral classical-quantum states such as Example~\ref{ex:chiral_deg} 
have zero values both of all our chirality measures and of the quantum Fisher information), it does indicate that states whose chirality can be detected by the measure in~\eqref{measure_1} have non-trivial quantum correlations.


Given a state $\rho$ and a Hamiltonian $H$, it can be associated with the one-parameter family of states $\rho(H,s) = e^{iHs} \rho e^{-iHs}$, where $s \in \mathbb{R}$. The sensitivity of this family of states to the parameter $s$ is quantified by the the quantum Fisher information (QFI), which is defined as follows: 
\begin{align} 
F_H(\rho(s)) &= \text{Tr}\left( \frac{\partial \rho}{\partial s} \mathcal{R}_{\rho(s)}^{-1}\left(\frac{\partial \rho}{\partial s}\right) \right)\nonumber\\
& = - \text{Tr}\left( [H, \rho(s)] \mathcal{R}_{\rho(s)}^{-1}([H, \rho(s)]) \right)
\end{align}
Here $\mathcal{R}_{\rho}^{-1}$
is a superopererator whose action on any operator $O$ is defined as follows in terms of the eigenstates $\{\ket{\psi_i}\}$ and eigenvalues $p_i$ of $\rho$:~\cite{Braunstein1994}~
\be 
\mathcal{R}_{\rho}^{-1}(O) = \sum_{\substack{j, k \text{ such that }\\ p_j + p_k \neq 0}}\frac{2}{p_j+p_k} \braket{\psi_j|O|\psi_k} \ket{\psi_j}\bra{\psi_k} \label{rdef}
\ee
The above operator is a solution to the equation 
\begin{equation}
    R^{-1}_{\rho}(O) \rho + \rho R^{-1}_{\rho}(O) = 2O \,  
\end{equation} 
and its action on $\frac{\partial \rho}{\partial s}$ is known as the symmetric logarithmic derivative.
For our later discussion, it will be useful to note that if $\rho$ is full-rank, then
\begin{equation}
\label{eq:expansionR}
        R^{-1}_{\rho}(O) = \int_{-\infty}^{\infty} ds\, \frac{1}{\cosh \pi s} \rho^{-\frac{1}{2} + is} O  \rho^{-\frac{1}{2} - is}. 
    \end{equation}
This identity can be readily verified by expressing the RHS of \eqref{eq:expansionR} in the eigenbasis of $\rho$ and using the fact that $\int_{-\infty}^{\infty} ds \frac{1}{\cosh \pi s} e^{i k s} = \frac{1}{\cosh(k/2)}$.

The QFI has found wide applications to quantum metrology and estimation theory due to the Cramer-Rao bound, according to which the QFI determines the minimum uncertainty in estimates of the parameter $s$ from  measurements of the state $\rho$ \cite{1976quantum, Statisticalinference, Braunstein1994}.

In what follows, we focus on a bipartite system $\rho_{AB}$, and take the Hamiltonian to  be the modular Hamiltonian of one subsystems, $H = K_i := \log \rho_i$. The corresponding quantum Fisher information is denoted as $F^{(i)}(\rho_{AB})$,
\begin{equation}
\label{eq:QFIdef2}
    F^{(i)}(\rho_{AB}) := F_{K_i}(\rho_{AB}),
\end{equation}
which in the close form reads $F^{(i)}(\rho_{AB})= - \tr( \rho_{AB} (R^{-1}_{\rho_{AB}}([K_i,\rho_{AB}]))^2)$, where $i$ could be $A$ or $B$.

We will call this quantity intrinsic interferometric power due to its close relation with the interferometric power (IP) defined in Refs.~\cite{Girolami2013,Girolami2014}. Given a bipartite state $\rho_{AB}$ and a real diagonal matrix $\Gamma_A$ of dimension $d_A$, the IP is defined by the minimal quantum Fisher information with respect to a Hamiltonian on $A$ with the same spectrum as $\Gamma_A$, i.e.,
\begin{equation}
    \mathcal{P}_{\Gamma_A}(\rho_{AB}) = \min_{H_A\sim \Gamma_A} F_{H_A}(\rho_{AB}),
\end{equation}
where $\sim$ means that the eigenvalues are the same. The IP is a resource of quantum correlations because it satisfies the set of axioms in Ref.~\cite{discordreview}. These includes the four properties listed in Table~\ref{tab:IP}. The intrinsic IP defined in Eq.~\eqref{eq:QFIdef2} satisfies a similar set of properties, with important differences explained below. The proof is straightforward, see appendix for more details. 

Let us clarify a few of the properties in the table. (1) Faithfulness means that the quantity vanishes and only vanishes for classical-quantum states. The IP is faithful if $\Gamma_A$ is nondegenerate. This is also true for the intrinsic IP if $\rho_A$ is nondegenerate. (2) Local unitary invariance means that the quantity is invariant under the conjugation of $U_A$ and $U_B$. This is satisfied for both quantities. (3) Monotonicity means that the quantity is monotonically decreasing under a quantum channel on $B$. This is also satisfied for both quantities. Note the apparent asymmetry between the subsystems $A$ and $B$ in the properties (1) and (3). One may also define another IP $P_{\Gamma_B}(\rho_{AB})$ and another intrinsic IP $F^{(B)}(\rho_{AB})$ that satisfy similar properties with the roles of $A$ and $B$ reversed. The property (4) is where the IP and the intrinsic IP differ most. For pure states, the IP reduces to minimal local variance \cite{Girolami2013}, an entanglement monotone. However, the intrinsic IP reduces to capacity of entanglement \cite{de_Boer_2019}, that is $\langle K^2_A\rangle - \langle K_A\rangle^2$. It is not an entanglement monotone as it vanishes for maximally entangled states. This is closely related to the property (1) of the intrinsic IP, which requires nondegeneracy of $\rho_A$ for faithfulness. 


\begin{table*}[htbp]
    \centering
    \begin{tabular}{|c|c|c|}
    \hline
     Property & IP $P_{\Gamma_A}(\rho_{AB})$ & Intrinsic IP $F^{(A)}(\rho_{AB})$ \\
     \hline 
      1) Faithfulness & is faithful if $\Gamma_A$ is nondegenrate  & is faithful if $\rho_A$ is nondegenerate\\
      \hline
      2) Local unitary invariance & is invariant under local unitaries  & is invariant under local unitaries  \\
      \hline
      3) Monotonicity &  \shortstack{is monotonically decreasing \\
      under local quantum channels on $B$} & \shortstack{is monotonically decreasing \\
      under local quantum channels on $B$} \\
      \hline
      4) Reduction for pure states & \shortstack{reduces to minimal local variance \cite{Girolami2013} \\ which is an entanglement monotone} 
       & \shortstack{reduces to capacity of entanglement \cite{de_Boer_2019} \\ which is NOT an entanglement monotone} \\
      \hline
    \end{tabular}
    \caption{Properties of the IP and the intrinsic IP.}
    \label{tab:IP}
\end{table*}










\subsection{Lower bound on intrinsic interferometric power from chirality}
\label{sec:4d}
We will now show that the intrinsic IP can be lower bounded by one of our nested commutator chirality measures from Sec.~\ref{sec:2c}.
Specifically, we integrate 
Eq.~\eqref{eq:measure_1-copy} with respect to the modular flow parameter $s$ to obtain an additive chirality measure
\begin{equation}\label{eq:defgamma}
    \gamma(\rho_{AB}) = \int_{-\infty}^{\infty} ds\, \frac{1}{\cosh \pi s}\gamma_{s}(\rho_{AB}).
\end{equation}

\begin{Proposition}
The following two inequalities hold if $\rho_{AB}$ is full-rank:
\label{thm:QFI}
\begin{eqnarray}
    | \gamma(\rho_{AB})|^2 &\leq& \tr(\rho_A K^2_A) F^{(B)}(\rho_{AB}) \, . \\
    | \gamma(\rho_{AB})|^2 &\leq& \tr(\rho_B K^2_B) F^{(A)}(\rho_{AB}) \, .
\end{eqnarray}
\end{Proposition}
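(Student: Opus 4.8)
The plan is to recognize $\gamma(\rho_{AB})$ as (essentially) the Hilbert--Schmidt overlap between $\rho_{AB}^{1/2}K_A\rho_{AB}^{1/2}$ and the symmetric logarithmic derivative underlying $F^{(B)}$, and then apply a Hilbert--Schmidt Cauchy--Schwarz inequality followed by an operator arithmetic--geometric mean inequality. Throughout write $\rho\equiv\rho_{AB}$, $L\equiv[K_B,\rho]$ and $\Lambda\equiv R^{-1}_{\rho}(L)$, so that by \eqref{rdef} and the definition of the intrinsic interferometric power one has $F^{(B)}(\rho_{AB})=-\tr(\rho\,\Lambda^2)$; note $\Lambda$ is anti-Hermitian, and full-rankness of $\rho_{AB}$ ensures that $K_A$, $\Lambda$ are bounded and all $s$-integrals converge.

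First I would simplify $\gamma_s$: by cyclicity of the trace, $\gamma_s=i\tr(\rho\,[K^{(+)}_A(s),K_B])=i\tr\big(K^{(+)}_A(s)\,L\big)$. Expanding $K^{(+)}_A(s)=\tfrac12(\rho^{is}K_A\rho^{-is}+\rho^{-is}K_A\rho^{is})$, carrying the modular unitaries onto $L$ by cyclicity, and using that the two resulting terms are swapped under $s\mapsto-s$ while $1/\cosh\pi s$ is even, definition \eqref{eq:defgamma} collapses to
\begin{equation}
\gamma(\rho_{AB})=i\,\tr\!\left(K_A\int_{-\infty}^{\infty}\!ds\,\frac{1}{\cosh\pi s}\,\rho^{is}L\,\rho^{-is}\right).
\end{equation}
The key step is the integral identity $\int_{-\infty}^{\infty}ds\,\tfrac{1}{\cosh\pi s}\,\rho^{is}L\rho^{-is}=\rho^{1/2}R^{-1}_{\rho}(L)\,\rho^{1/2}=\rho^{1/2}\Lambda\,\rho^{1/2}$, which one checks in the eigenbasis of $\rho$ using $\int ds\,\tfrac{1}{\cosh\pi s}e^{i\omega s}=1/\cosh(\omega/2)$ at frequency $\omega=\log(p_j/p_k)$ (this produces precisely the factor $\tfrac{2\sqrt{p_jp_k}}{p_j+p_k}$ relating the two sides; cf.\ \eqref{eq:expansionR}). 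Hence $\gamma(\rho_{AB})=i\,\tr\!\big(\rho^{1/2}K_A\rho^{1/2}\,\Lambda\big)$.

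The remainder is routine. Splitting $\rho^{1/2}=\rho^{1/4}\rho^{1/4}$ and using cyclicity, $\gamma=i\,\tr\big((\rho^{1/4}K_A\rho^{1/4})(\rho^{1/4}\Lambda\rho^{1/4})\big)$; the first factor is Hermitian and the second anti-Hermitian (so $\gamma\in\mathbb{R}$), and Hilbert--Schmidt Cauchy--Schwarz gives $|\gamma|^2\le\tr(\rho^{1/2}K_A\rho^{1/2}K_A)\cdot\big(-\tr(\rho^{1/2}\Lambda\rho^{1/2}\Lambda)\big)$. Passing once more to the eigenbasis of $\rho$, the two factors are $\sum_{jk}\sqrt{p_jp_k}\,|(K_A)_{jk}|^2$ and $\sum_{jk}\sqrt{p_jp_k}\,|\Lambda_{jk}|^2$, while $\tr(\rho_A K_A^2)=\sum_{jk}\tfrac{p_j+p_k}{2}|(K_A)_{jk}|^2$ and $F^{(B)}(\rho_{AB})=\sum_{jk}\tfrac{p_j+p_k}{2}|\Lambda_{jk}|^2$; the elementary bound $\sqrt{p_jp_k}\le\tfrac{p_j+p_k}{2}$ upgrades the previous inequality to $|\gamma(\rho_{AB})|^2\le\tr(\rho_A K_A^2)\,F^{(B)}(\rho_{AB})$. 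For the second inequality I would first note the identity $\tr\!\big(\rho^{1/2}K_A\rho^{1/2}R^{-1}_\rho([K_B,\rho])\big)=-\tr\!\big(\rho^{1/2}K_B\rho^{1/2}R^{-1}_\rho([K_A,\rho])\big)$ (immediate in the eigenbasis), which rewrites $\gamma$ in the $A\leftrightarrow B$-exchanged form, and then repeat the Cauchy--Schwarz/AM--GM steps with $K_A$ and $K_B$ interchanged.

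The main obstacle is the second paragraph: establishing that the $1/\cosh\pi s$-smeared modular orbit of $[K_B,\rho]$ collapses exactly to $\rho^{1/2}R^{-1}_\rho([K_B,\rho])\rho^{1/2}$, which is what links $\gamma$ to the quantum Fisher information of $\rho_{AB}$ under the flow generated by $K_B$. After that, the only points needing care are the symmetric placement of the $\rho^{1/4}$ factors in the Cauchy--Schwarz step and the direction of the final AM--GM inequality.
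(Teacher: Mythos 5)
Your proof is correct and follows essentially the same route as the paper's: cyclicity of the trace to move the commutator onto $K_B$, the $1/\cosh\pi s$ integral identity identifying the smeared modular orbit with $\rho^{1/2}R^{-1}_{\rho}(\cdot)\,\rho^{1/2}$ (the paper's Eq.~\eqref{eq:expansionR}), and then Cauchy--Schwarz, with the $A\leftrightarrow B$ antisymmetry giving the second inequality. The only difference is that the paper chooses the asymmetric factorization $P=K_A\rho^{1/2}$, $Q=R^{-1}_{\rho}([\rho_{AB},K_B])\rho^{1/2}$, which lands directly on $\tr(\rho_A K_A^2)$ and $F^{(B)}(\rho_{AB})$ with no further work, whereas your symmetric $\rho^{1/4}$ split requires the additional elementwise AM--GM step $\sqrt{p_jp_k}\le(p_j+p_k)/2$ --- a harmless detour that in fact yields a marginally stronger intermediate bound.
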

\begin{proof}
Using the cyclic property of trace, we can rewrite Eq.~\eqref{eq:measure_1-copy} as 
\begin{equation}
\label{eq:measure1_alt}
    \gamma_s = -i\tr(K^{(+)}_A(s)[\rho_{AB},K_B])
\end{equation}
Combining Eqs.~\eqref{eq:measure1_alt} and \eqref{eq:defgamma}, together with the fact that $1/\cosh(\pi s)$ is an even function of $s$, we see that
\begin{equation}
    \gamma(\rho_{AB}) = -\int_{-\infty}^{\infty} ds\, \frac{1}{\cosh \pi s} \tr(K_A \rho^{is}_{AB}[\rho_{AB},K_B] \rho^{-is}_{AB})
\end{equation}
Using \eqref{eq:expansionR}, we then have
\begin{equation}
    \gamma(\rho_{AB})= -i\tr( K_A \rho^{1/2}_{AB} R^{-1}_{\rho_{AB}}([\rho_{AB},K_B]) \rho^{1/2}_{AB} )
\end{equation}
We can now use the Cauchy Schwarz inequality 
\begin{equation}
\label{eq:CS}
    |\tr(PQ)|^2 \leq \tr(P^{\dagger} P) \tr(Q^{\dagger} Q),
\end{equation}
with $P = K_A \rho^{1/2}_{AB}$, $Q = R^{-1}_{\rho_{AB}}([\rho_{AB},K_B]) \rho^{1/2}_{AB}$.  Next, note that \be 
\tr(P^{\dagger} P) = \tr(\rho_{AB} K^2_A) = \tr(\rho_A K^2_A)
\ee
and 
\begin{align}
\tr(Q^{\dagger} Q) &= -\tr( \rho_{AB} (R^{-1}_{\rho_{AB}}([\rho_{AB},K_B]))^2 )  \nonumber\\
& = -\tr\left( [\rho_{AB},K_B] R^{-1}_{\rho_{AB}}([\rho_{AB},K_B]) \right)\nonumber\\
& = F^{(B)}(\rho_{AB}).
\end{align}
Then Eq.~\eqref{eq:CS} implies that
\begin{equation}
    | \gamma(\rho_{AB})|^2 \leq \tr(\rho_A K^2_A) F^{(B)}(\rho_{AB}
    )
\end{equation}
as desired. 


 Recall from Sec.~\ref{sec:2c} that $\gamma_s$ is odd under exchange of $A$ and $B$. The second inequality in the statement of the proposition thus follows from the same proof with the roles of $A$ and $B$ exchanged.
\end{proof}

We note that the prefactor $\tr(\rho_i K^2_i)$ can be bounded above with the dimension $d_i$ of the subsystem $i=A,B$. 
\begin{corollary}
\label{col:cd}
    \begin{equation}
    | \gamma(\rho_{AB})|^2 \leq c(d_i) F^{(j)}(\rho_{AB}
    ), 
\end{equation}
where $(i,j) = (A,B) ~\mathrm{or}~(B,A)$ and $c(d) = (\log d)^2$ for $d\geq 3$. 
\end{corollary}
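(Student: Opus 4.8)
The plan is to strip off the prefactor in Proposition~\ref{thm:QFI} and then prove a self-contained bound on it. By Proposition~\ref{thm:QFI}, $|\gamma(\rho_{AB})|^2 \le \tr(\rho_i K_i^2)\, F^{(j)}(\rho_{AB})$ for $(i,j)=(A,B)$ or $(B,A)$, so it suffices to show that $\tr(\rho_i K_i^2)\le(\log d_i)^2$ whenever $d_i\ge 3$. Diagonalizing $\rho_i=\sum_{k=1}^{d}p_k\ket{k}\bra{k}$ with $d=d_i$, one has $\tr(\rho_i K_i^2)=\sum_k p_k(\log p_k)^2=\sum_k g(p_k)$, where $g(x):=x(\log x)^2$ with the convention $g(0):=0$. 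Thus the statement reduces to the purely combinatorial claim: for every probability vector $(p_1,\dots,p_d)$ with $d\ge 3$, $\sum_{k=1}^d g(p_k)\le(\log d)^2$, with equality at the uniform distribution.

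The key structural input is that $g'(x)=(\log x)(\log x+2)$ and $g''(x)=\tfrac{2}{x}(\log x+1)$, so $g$ is concave on $(0,1/e)$ and convex on $(1/e,1)$; its concave envelope $\bar g$ on $[0,1]$ therefore coincides with $g$ on an initial interval $[0,x_0]$ and equals the tangent line of $g$ at $x_0$ on $[x_0,1]$, where $x_0\in(1/5,1/4)$ is the unique root of $\log x=2(x-1)$ (the condition that this tangent pass through $(1,g(1))=(1,0)$). For $d\ge 5$ we have $1/d\le 1/5<x_0$, hence $\bar g(1/d)=g(1/d)$, and Jensen's inequality for the concave function $\bar g$ gives $\sum_k g(p_k)\le\sum_k\bar g(p_k)\le d\,\bar g\!\big(\tfrac1d\sum_k p_k\big)=d\,g(1/d)=(\log d)^2$. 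The two remaining cases $d=3$ and $d=4$ I would settle by a direct optimization of the smooth function $\sum_k g(p_k)$ over the compact simplex: at an interior maximizer the Lagrange condition forces $g'(p_k)$ to be constant in $k$, and since $g'$ attains each value at most twice on $(0,1)$ the maximizer has at most two distinct coordinate values, reducing the problem to one real variable whose maximum is checked to be $(\log d)^2$ at the uniform point; each boundary face reduces to a simplex on at most $d-1$ atoms, and for at most $2$ atoms $\sum_k g(p_k)\le 2\max_{[0,1]}g=8/e^2<(\log 3)^2$, so no boundary face competes with the interior. Combining all cases gives $\tr(\rho_i K_i^2)\le(\log d_i)^2$, and substituting into Proposition~\ref{thm:QFI} yields the corollary.

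The one genuine obstacle is that $g$ is neither convex nor concave on $[0,1]$, so the one-line concave-envelope argument disposes only of $d\ge 5$; the cases $d=3,4$ require the short hands-on optimization above. This is also precisely why the hypothesis $d\ge 3$ cannot be relaxed: for $d=2$ the maximum of $g(p)+g(1-p)$ is attained away from $p=1/2$ and strictly exceeds $(\log 2)^2$, so the bound with $c(2)=(\log 2)^2$ would be false.
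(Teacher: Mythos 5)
Your proposal is correct. Like the paper, you reduce the corollary to the elementary bound $\tr(\rho_i K_i^2)=\sum_k p_k(\log p_k)^2\le(\log d)^2$ for $d\ge 3$, but you establish that bound by a genuinely different route for most dimensions. The paper (Appendix~\ref{appendix:bounding}) runs a Lagrange-multiplier analysis for every $d$: the stationarity condition $(\log x_k)^2+2\log x_k=-\lambda$ is quadratic in $\log x_k$, so a non-uniform interior critical point carries exactly two values $a,b$ with $ab=e^{-2}$ by Vieta, and AM--GM then gives $ma+(d-m)b\ge 2\sqrt{m(d-m)}/e>1$ for all $d\ge 3$, contradicting normalization; hence the uniform point is the only interior critical point. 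Your concave-envelope/Jensen argument replaces all of this for $d\ge 5$ with the one-line estimate $\sum_k g(p_k)\le\sum_k\bar g(p_k)\le d\,\bar g(1/d)=d\,g(1/d)=(\log d)^2$, which has the advantage of covering the closed simplex (including faces where some $p_k=0$) automatically — a point the paper's interior critical-point analysis leaves implicit and handles only by its case-by-case enumeration. The cost is that $1/d<x_0$ fails for $d=3,4$, and there your reduction to at most two distinct coordinate values via injectivity counting of $g'$ is essentially the paper's Vieta/AM--GM step in different clothing. Both arguments pinpoint the same reason the hypothesis $d\ge 3$ cannot be dropped: for $d=2$ the maximum of $g(p)+g(1-p)$ is attained off the uniform point and equals $\approx 0.563>(\log 2)^2$, which is exactly the value the paper records as $c(2)$.
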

The proof is simple and can be found in Appendix~\ref{appendix:bounding}.

Let us comment on the implication of the proposition above. Proposition~\ref{thm:QFI} indicates that if the chirality measure $\gamma(\rho_{AB}) \neq 0$, then both quantum Fisher information $F^{(A)}(\rho_{AB}) > 0$ and $F^{(B)}(\rho_{AB}) > 0$.  Thus, a chiral state whose chirality can be detected by the measure in~\eqref{eq:defgamma} must have non-trivial quantum correlations. 
Note that, however, $\gamma(\rho_{AB}) = 0$ does not imply that the state is nonchiral.
As noted earlier, the chiral classical-quantum in Eq.~\eqref{chiral_example} is trivially consistent with Proposition~\ref{thm:QFI} despite being chiral,  as it has both $F^{(A)} = 0$ and $\gamma(\rho_{AB}) = 0$.




\section{Discussion} \label{sec:discussion}
In this work, we have discussed the notion of chirality for generic quantum states. We defined a chiral state as one that cannot be connected with its complex conjugate in a local product basis by local unitary transformations. We introduced two types of chirality measures. We first considered a measure called the chiral log-distance, which is operationally well-motivated and even under complex conjugation. We further constructed a series of additive chirality measures  
based on a nested commutator ansatz, which are odd under complex conjugation and additive under tensor product. The non-vanishing of any such  measure can be used as an indicator of chirality in a given state. 

We showed the relation between chirality and two kinds of quantum resources, magic and discord-like quantum correlations. We showed that all stabilizer states of qubits are non-chiral, and further that the chiral log-distance lower-bounds two useful notions of nonstabilizerness, namely the stabilizer nullity and the stabilizer fidelity.
We then discussed the relation between chirality and discord-like quantum correlations. We showed that chirality requires noncommutativity between the full system and subsystem reduced density matrices under certain generic conditions. More quantitatively, we introduced the intrinsic interferometric power as a measure of quantum correlations, and showed a lower bound on this quantity in terms of one of the additive nested commutator chirality measures. 

One important question to resolve in future work is whether one can construct a closed-form expression in term of the density matrix which serves as a faithful chirality measure.  There are bipartite chiral states whose chirality cannot be detected by the additive measures we construct by the nested commutators (Example~\ref{ex:chiral_deg_2}). Even though the chiral log-distance can always detect such chirality, it is not clear whether there is a computation-efficient measure without optimization over unitaries. Understanding the physics of this exotic type of chirality remains a future problem.

Our work opens up several future directions. We found that chirality gives a lower bound on magic, which hinders classical simulation of quantum dynamics.  It is then a natural question whether one can more directly understand why chirality  provides an obstacle to classical simulation. 
If a state is chiral, then there is no local basis such that the wavefunction is real, which indicates the sign problem in a Monte-Carlo simulation. A related question is whether chiral states can be efficiently represented by tensor networks, such as matrix product states or projected entangled pair states. 

Another interesting direction would be to compute the various chirality measures introduced in this paper in quantum many-body states, and use them to identify universal properties of chiral quantum phases. It would further be interesting to understand how these  measures evolve under quantum dynamics, such as Clifford circuits doped with T gates, floquet dynamics, chaotic evolutions, and ergodicity-breaking systems. The behaviour of  magic has previously been studied in these dynamical contexts, and it would be interesting to compare it to the behaviour of chirality.






As discussed in the introduction and Sec.~\ref{sec:2a}, the notion of chirality falls out of the scope of conventional resource theories due to its non-monotonicity under partial trace. The fact that additive chirality measures are necessarily odd under time reversal, as we discussed in Sec.~\ref{sec:2b}, and hence can have either sign, poses a further challenge to formulating a resource theory. However, this unusual feature off oddness 
can be viewed as a benefit rather than a drawback, as it seems to detect a certain geometry of the multipartite quantum state that is invisible to the traditional resource measures. The measure of quantum state geometry modulo local unitary resonates with those considered in few-qubit states in the early days \cite{Hilary2000,Gingrich2002-3qubit,Makhlin2002}. It would be interesting to make this geometric picture more precise.



\section*{Acknowledgments} We thank Hilary Carteret for helpful discussions and, in particular, for pointing us to Ref.~\cite{Makhlin2002}. This work was supported by the Perimeter Institute for Theoretical Physics (PI) and the Natural Sciences and Engineering Research Council of Canada (NSERC). Research at PI is supported in part by the Government of Canada through the Department of Innovation, Science and Economic Development Canada and by the Province of Ontario through the Ministry of Colleges and Universities. SV is supported by Google and SITP. IK and BS acknowledge support from NSF under award number PHY-2337931. 
BS was supported by the Simons Collaboration on Ultra-Quantum Matter, a grant from the Simons Foundation (652264, JM), the faculty startup grant of J. Y. Lee, and the IQUIST fellowship at UIUC.

\bibliographystyle{unsrt}
\bibliography{ref}

\begin{thebibliography}{10}

\bibitem{Qi_2011}
Xiao-Liang Qi and Shou-Cheng Zhang.
\newblock Topological insulators and superconductors.
\newblock {\em Reviews of Modern Physics}, 83(4):1057–1110, October 2011.

\bibitem{Peskin}
Michael~E. Peskin and Daniel~V. Schroeder.
\newblock {\em {An Introduction to quantum field theory}}.
\newblock Addison-Wesley, Reading, USA, 1995.

\bibitem{Kane1997}
C.~L. {Kane} and Matthew P.~A. {Fisher}.
\newblock {Quantized thermal transport in the fractional quantum Hall effect}.
\newblock {\em \prb}, 55(23):15832--15837, June 1997.

\bibitem{Wen1991}
X.~G. {Wen}.
\newblock {Gapless boundary excitations in the quantum Hall states and in the
  chiral spin states}.
\newblock {\em \prb}, 43(13):11025--11036, May 1991.

\bibitem{Kitaev2005}
Alexei {Kitaev}.
\newblock {Anyons in an exactly solved model and beyond}.
\newblock {\em Annals of Physics}, 321(1):2--111, January 2006.

\bibitem{Kim2022-J}
Isaac~H. {Kim}, Bowen {Shi}, Kohtaro {Kato}, and Victor~V. {Albert}.
\newblock {Chiral Central Charge from a Single Bulk Wave Function}.
\newblock {\em \prl}, 128(17):176402, April 2022.

\bibitem{Kim2022-J-long}
Isaac~H. {Kim}, Bowen {Shi}, Kohtaro {Kato}, and Victor~V. {Albert}.
\newblock {Modular commutator in gapped quantum many-body systems}.
\newblock {\em \prb}, 106(7):075147, August 2022.

\bibitem{Zou_2022_modular}
Yijian Zou, Bowen Shi, Jonathan Sorce, Ian~T. Lim, and Isaac~H. Kim.
\newblock Modular commutators in conformal field theory.
\newblock {\em Physical Review Letters}, 129(26), December 2022.

\bibitem{Fan_2022}
Ruihua Fan.
\newblock From entanglement generated dynamics to the gravitational anomaly and
  chiral central charge.
\newblock {\em Phys. Rev. Lett.}, 129:260403, Dec 2022.

\bibitem{conformal-geometry2024}
Isaac~H. {Kim}, Xiang {Li}, Ting-Chun {Lin}, John {McGreevy}, and Bowen {Shi}.
\newblock {Conformal geometry from entanglement}.
\newblock {\em arXiv e-prints}, page arXiv:2404.03725, April 2024.

\bibitem{Junge:2015lmb}
Marius Junge, Renato Renner, David Sutter, Mark~M. Wilde, and Andreas Winter.
\newblock {Universal Recovery Maps and Approximate Sufficiency of Quantum
  Relative Entropy}.
\newblock {\em Annales Henri Poincare}, 19(10):2955--2978, 2018.

\bibitem{Wilde:2015xoa}
Mark~M. Wilde.
\newblock {Recoverability in quantum information theory}.
\newblock {\em Proc. Roy. Soc. Lond. A}, 471(2182):20150338, 2015.

\bibitem{Vardhan2023}
Shreya {Vardhan}, Annie~Y. {Wei}, and Yijian {Zou}.
\newblock {Petz recovery from subsystems in conformal field theory}.
\newblock {\em Journal of High Energy Physics}, 2024(3):16, March 2024.

\bibitem{Zou_2024_PRB}
Yangrui Hu and Yijian Zou.
\newblock Petz map recovery for long-range entangled quantum many-body states.
\newblock {\em Phys. Rev. B}, 110:195107, Nov 2024.

\bibitem{Chitambar_2019}
Eric Chitambar and Gilad Gour.
\newblock Quantum resource theories.
\newblock {\em Reviews of Modern Physics}, 91(2), April 2019.

\bibitem{Vidal_2000}
Guifré Vidal.
\newblock Entanglement monotones.
\newblock {\em Journal of Modern Optics}, 47(2–3):355–376, February 2000.

\bibitem{Veitch_2014}
Victor Veitch, S~A Hamed~Mousavian, Daniel Gottesman, and Joseph Emerson.
\newblock The resource theory of stabilizer quantum computation.
\newblock {\em New Journal of Physics}, 16(1):013009, January 2014.

\bibitem{Gottesman1998}
Daniel {Gottesman}.
\newblock {The Heisenberg Representation of Quantum Computers}.
\newblock {\em arXiv e-prints}, pages quant--ph/9807006, July 1998.

\bibitem{Howard_2017}
Mark Howard and Earl Campbell.
\newblock Application of a resource theory for magic states to fault-tolerant
  quantum computing.
\newblock {\em Physical Review Letters}, 118(9), March 2017.

\bibitem{Beverland_2020}
Michael Beverland, Earl Campbell, Mark Howard, and Vadym Kliuchnikov.
\newblock Lower bounds on the non-clifford resources for quantum computations.
\newblock {\em Quantum Science and Technology}, 5(3):035009, June 2020.

\bibitem{Liu_2022}
Zi-Wen Liu and Andreas Winter.
\newblock Many-body quantum magic.
\newblock {\em PRX Quantum}, 3(2), May 2022.

\bibitem{Ollivier2001}
Harold Ollivier and Wojciech~H. Zurek.
\newblock Quantum discord: A measure of the quantumness of correlations.
\newblock {\em Phys. Rev. Lett.}, 88:017901, Dec 2001.

\bibitem{discordreview}
Gerardo {Adesso}, Thomas~R. {Bromley}, and Marco {Cianciaruso}.
\newblock {Measures and applications of quantum correlations}.
\newblock {\em Journal of Physics A Mathematical General}, 49(47):473001,
  November 2016.

\bibitem{Girolami2013}
Davide {Girolami}, Tommaso {Tufarelli}, and Gerardo {Adesso}.
\newblock {Characterizing Nonclassical Correlations via Local Quantum
  Uncertainty}.
\newblock {\em \prl}, 110(24):240402, June 2013.

\bibitem{Girolami2014}
Davide {Girolami}, Alexandre~M. {Souza}, Vittorio {Giovannetti}, Tommaso
  {Tufarelli}, Jefferson~G. {Filgueiras}, Roberto~S. {Sarthour}, Diogo~O.
  {Soares-Pinto}, Ivan~S. {Oliveira}, and Gerardo {Adesso}.
\newblock {Quantum Discord Determines the Interferometric Power of Quantum
  States}.
\newblock {\em \prl}, 112(21):210401, May 2014.

\bibitem{uhlmann1976transition}
Armin Uhlmann.
\newblock The “transition probability” in the state space of a *-algebra.
\newblock {\em Reports on Mathematical Physics}, 9(2):273--279, 1976.

\bibitem{Vidal_2002}
G.~Vidal and R.~F. Werner.
\newblock Computable measure of entanglement.
\newblock {\em Physical Review A}, 65(3), February 2002.

\bibitem{Horodecki_1996}
Michał Horodecki, Paweł Horodecki, and Ryszard Horodecki.
\newblock Separability of mixed states: necessary and sufficient conditions.
\newblock {\em Physics Letters A}, 223(1–2):1–8, November 1996.

\bibitem{nielsen00}
Michael~A. Nielsen and Isaac~L. Chuang.
\newblock {\em Quantum Computation and Quantum Information}.
\newblock Cambridge University Press, 2000.

\bibitem{Aaronson_2004}
Scott Aaronson and Daniel Gottesman.
\newblock Improved simulation of stabilizer circuits.
\newblock {\em Physical Review A}, 70(5), November 2004.

\bibitem{Makhlin2002}
Yuriy {Makhlin}.
\newblock {Nonlocal Properties of Two-Qubit Gates and Mixed States, and the
  Optimization of Quantum Computations}.
\newblock {\em Quantum Information Processing}, 1(4):243--252, August 2002.

\bibitem{Braunstein1994}
Samuel~L. Braunstein and Carlton~M. Caves.
\newblock Statistical distance and the geometry of quantum states.
\newblock {\em Phys. Rev. Lett.}, 72:3439--3443, May 1994.

\bibitem{1976quantum}
{\em Quantum Detection and Estimation Theory}.
\newblock Mathematics in Science and Engineering. Academic Press, 1976.

\bibitem{Statisticalinference}
Steven~M. Kay.
\newblock {\em Fundamentals of statistical signal processing: estimation
  theory}.
\newblock Prentice-Hall, Inc., USA, 1993.

\bibitem{de_Boer_2019}
Jan de~Boer, Jarkko Järvelä, and Esko Keski-Vakkuri.
\newblock Aspects of capacity of entanglement.
\newblock {\em Physical Review D}, 99(6), March 2019.

\bibitem{Hilary2000}
H.~A. {Carteret} and A.~{Sudbery}.
\newblock {Local symmetry properties of pure three-qubit states}.
\newblock {\em Journal of Physics A Mathematical General}, 33(28):4981--5002,
  July 2000.

\bibitem{Gingrich2002-3qubit}
R.~M. {Gingrich}.
\newblock {Properties of entanglement monotones for three-qubit pure states}.
\newblock {\em \pra}, 65(5):052302, May 2002.

\bibitem{zhou2019exactcorrespondencequantumfisher}
Sisi Zhou and Liang Jiang.
\newblock An exact correspondence between the quantum fisher information and
  the bures metric, 2019.

\end{thebibliography}

\appendix

\onecolumngrid

\section{Proof of Proposition~\ref{prop:construction}}\label{appendix:A}
In this appendix, we provide the technical details of  proving Proposition~\ref{prop:construction}.
\begin{proof}
    For (1), the additivity follows from the fact that $\log \rho_{AA'BB'} = \log \rho_{AB} + \log \rho_{A'B'}$ for product state $\rho_{AA'BB'} = \rho_{AB} \otimes \rho_{A'B'}$. The action of complex conjugation is $(\log \rho_{AB})^{*} = \log \rho^{*}_{AB} = \log \rho^{T}_{AB} = (\log \rho_{AB})^{T}$. $\log \rho_A$ and $\log \rho_B$ follow from the same argument, (2) is obvious.
    
    For (3), the additivity follows from 
    \begin{equation}
        \begin{aligned}
           & ~~~~ [X(\rho_{AB}\otimes \rho_{A'B'}), Y(\rho_{AB}\otimes \rho_{A'B'})] \\
           & = [X(\rho_{AB}) +X(\rho_{A'B'}), Y(\rho_{AB}) +Y(\rho_{A'B'})] \\
           & = [X(\rho_{AB}), Y(\rho_{AB}) ] + [X(\rho_{A'B'}), Y(\rho_{A'B'}) ],
        \end{aligned}
    \end{equation}
    where we have used the fact that operators on $AB$ and operators on $A'B'$ commute. The action of complex conjugation follows from
    \begin{equation}
        \begin{aligned}
            &~~~~ [X(\rho^{*}_{AB}), Y(\rho^{*}_{AB})] \\
            & = [aX(\rho_{AB})^{T}, bY(\rho_{AB})^{T}]  \\
            & = ab [X(\rho_{AB})^{T}, Y(\rho_{AB})^{T}] \\
            & = -ab[X(\rho_{AB}), Y(\rho_{AB})]^{T}.
        \end{aligned}
    \end{equation}

    For (4), the additivity of the chirality measure follows from
    \begin{equation}
        \begin{aligned}
           &~~~~ J(\rho_{AB}\otimes \rho_{A'B'})  \\
           & = i\tr(\rho_{AB}\otimes \rho_{A'B'} (X(\rho_{AB})+X(\rho_{A'B'}))) \\
           & = i(\tr(\rho_{AB}X(\rho_{AB}))\tr(\rho_{A'B'}) + i\tr(\rho_{A'B'}X(\rho_{A'B'})) \tr(\rho_{AB}) \\
           & = J(\rho_{AB}) + J(\rho_{A'B'}).
        \end{aligned}
    \end{equation}
    Furthermore, $J(\rho_{AB})^{*} = -J(\rho^{*}_{AB}) = -i\tr(\rho^{*}_{AB} X(\rho^{*}_{AB})) = i\tr(\rho^{T}_{AB} X(\rho_{AB})^{T}) = i\tr(\rho_{AB} X(\rho_{AB})) = -J(\rho_{AB})$. Thus, $J$ is real and odd under complex conjugation.

    For (5), the additivity follows from 
\begin{equation}
\begin{aligned}
    & ~~~~J(\rho_{AB} \otimes \rho_{A'B'}) \\
    & = i\tr\left(\rho_{AB} \otimes \rho_{A'B'} \left\{ X(\rho_{AB}) + X(\rho_{A'B'}), Y(\rho_{AB}) + Y(\rho_{A'B'}) \right\} \right) \\
    & = i\tr\left( \rho_{AB} \left\{ X(\rho_{AB}), Y(\rho_{AB}) \right\} \right) + i\tr\left( \rho_{A'B'} \left\{ X(\rho_{A'B'}), Y(\rho_{A'B'}) \right\} \right) \\
    & ~~~+ 2i \tr\left( \rho_{AB} X(\rho_{AB}) \right) \tr\left( \rho_{A'B'} Y(\rho_{A'B'}) \right) \\
    & ~~~+ 2i \tr\left( \rho_{AB} Y(\rho_{AB}) \right) \tr\left( \rho_{A'B'} X(\rho_{A'B'}) \right) \\
    & = i\tr\left( \rho_{AB} \left\{ X(\rho_{AB}), Y(\rho_{AB}) \right\} \right) + i\tr\left( \rho_{A'B'} \left\{ X(\rho_{A'B'}), Y(\rho_{A'B'}) \right\} \right) \\
    & = J(\rho_{AB}) + J(\rho_{A'B'}).
\end{aligned}
\end{equation}
     The action of complex conjugation follows from
\begin{equation}
\begin{aligned}
    & ~~~~ J(\rho_{AB})^{*} = -J(\rho^{*}_{AB}) = -i\tr\left(\rho^{*}_{AB} \left\{ X(\rho^{*}_{AB}), Y(\rho^{*}_{AB})\right\} \right) \\
    & =- \tr\left(\rho^{T}_{AB} \left\{ aX(\rho_{AB})^T, bY(\rho_{AB})^T\right\} \right) \\
   & = -iab \tr\left(\rho^{T}_{AB} \left\{ X(\rho_{AB}), Y(\rho_{AB})\right\}^T \right) \\
    & = -iab \tr\left(\rho_{AB} \left\{ X(\rho_{AB}), Y(\rho_{AB})\right\} \right) \\
    & = J(\rho_{AB}).
\end{aligned}
\end{equation}
Thus, $J$ is an additive chirality measure.
\end{proof}

\section{Bounding $\tr (\rho_i K^2_i)$}\label{appendix:bounding}
In Corollary~\ref{col:cd}, we have used the upper bound $\tr(\rho_i K^2_i)\leq c(d_i)$ where $c(d_i) = (\log d_i)^2$ for $d_i\geq 3$. In this appendix, we provide a simple proof.
Let $0<x_i<1 $ and $\sum_i x_i = 1$ be the eigenvalues of $\rho$, we can compute the maximum of
\begin{equation}
    f(x) = \sum_{i} x_i (\log x_i)^2
\end{equation}
using the Lagrange multiplier. Define
\begin{equation}
    F(x,\lambda) = f(x) +\lambda \left( \sum_{i} x_i -1\right)
\end{equation}
The extrema are given by the solutions of
\begin{equation}
    \frac{\partial F}{\partial x_i} = 0, ~~ \frac{\partial F}{\partial \lambda} = 0.
\end{equation}
This gives
\begin{equation}
    (\log x_i)^2 + 2\log x_i = -\lambda
\end{equation}
If $\lambda\leq 0$, then the only solution is all $x_i$ being equal, thus the maximum is given by $x_i = 1/d$ and $f(x) = (\log d)^2$. If $\lambda>0$, then the two solutions of $\log x_i$ both have $-2<\log x_i<0$. Taking into account that $\sum_{i} x_i =1$, we must have $d<e^{2}$, which means $d \leq 7$. Thus for $d\geq 8$ the only extremum is given by $x_i = 1/d$ for all $i$'s. 
\begin{equation}
    f(x) \leq (\log d)^2,~~~ d\geq 8.
\end{equation}
For $d \leq 7$, there are possibly many extrema, but one can enumerate them and compare the value of $f(x)$ case by case.

$d=2$ - the extrema is either $x_1=x_2 = 1/2$, in which case $f(x) = (\log 2)^2 = 0.4804$ or $x_1, x_2$ satisfying $x_1 x_2 = e^{-2}$ and $x_1+x_2 = 1$. In the latter case one solution is given by $x_1 = 0.839, x_2 = 0.161$, which gives $f(x) = 0.563$. Thus the maximum is given by $x_1 = 0.839, x_2 = 0.161$.

$d=3$ - The extrema is either $x_i = 1/3$, in which case $f(x) = (\log 3)^2 = 1.207$, or $x_1=x_2 \neq x_3$. In the latter case we must have $x_1 x_3 = e^{-2}$. However, this contradicts the condition that $2x_1+x_3 = 1$, as $2x_1 + x_3 \geq 2\sqrt{2x_1 x_3} = 2\sqrt{2}/e>1$. Thus the only extremum is given by all $x_i = 1/3$ and $f(x) = (\log 3)^2 = 1.207$.

$d \geq 4$ - the only possible extrema is again all $x_i = 1/d$, analogous to the case of $d=3$. For contradiction, consider if one $x_i := a$ is different from others and suppose there are $m$ of them which takes the same value, then the rest of $x_i$'s must take on the values $b = e^{-2}/a$. This indicates that $ma + (d-m)b = 1$, which is not possible since $ma + (d-m) b \geq 2\sqrt{m(d-m)ab} = 2\sqrt{m(d-m)}/e \geq 2\sqrt{d-1}/e > 1 $.

Thus we obtain $c(d) = \max f(x)$ for all $d$,
\begin{equation}
    c(d) = (\log d)^2,~~~ d\geq 3.
\end{equation}
and $c(2) = 0.563> (\log 2)^2$.

\section{Properties of the intrinsic IP}
In the main text, we list four properties of the intrinsic IP, that is, the quantum Fisher information $F^{(A)}(\rho_{AB})$.

(1) If $\rho_{AB}$ is a classical-quantum state then $F^{(A)}(\rho_{AB})$ vanishes; furthermore if $\rho_A$ is nondegenerate then $F^{(A)}(\rho_{AB})$ vanishes only on classical-quantum states;

(2) $F^{(A)}(\rho_{AB})$ is invariant under local unitary operations applied to the state $\rho_{AB}$;

(3) $F^{(A)}(\rho_{AB})$ is monotonically decreasing under local quantum channels on subsystem $B$;

(4) $F^{(A)}(\rho_{AB})$ reduces to the so-called capacity of entanglement \cite{de_Boer_2019} for pure states $\rho_{AB}$. 

We provide the proof in this appendix.
\begin{proof}
    For the first part of (1), consider a classical-quantum state $\rho_{AB}$, we have $[\rho_{AB},K_A] = 0$, which indicates that $\rho_{AB}$ is invariant under the modular flow. Thus $F^{(A)}(\rho_{AB})=0$ by Eq.~\eqref{eq:QFIdef1}. 
    
    The second part of (1) follows from the property (1) of the IP. The IP must vanish because the QFI vanishes for a particular nondegenerate $\Gamma_A \sim \log \rho_A$. 
    
    (2) can be most easily seen from the basis independence of the QFI. More concretely,  one can alternatively write the QFI as \cite{zhou2019exactcorrespondencequantumfisher}
\begin{equation}
\label{eq:QFIdef1}
    F_H = \lim_{dX\rightarrow 0} \frac{D_B(\rho(X-dX),\rho(X+dX))}{dX^2},
\end{equation}
where $\rho(X) := e^{iK_A s} \rho_{AB} e^{-iK_A s}$. Such an expression is apparently invariant under a local change of basis, 
    
    (3) follows from the monotonicity of the QFI. For any Hamiltonian $H_A$ on $A$ and any state $\rho_{AB}$, $F_{H_A}(\rho_{AB})$ monotonically decreases under a quantum channel on $B$. This applies to $F^{(A)}(\rho_{AB})$, which in particular chooses $H_A = K_A$.
    
    (4) follows from the fact that the QFI reduces the variance $F_H(\rho) = \tr(\rho H^2)-\tr(\rho H)^2$ for a pure state. If one chooses $H=K_A$, then the variance $\langle K^2_A\rangle - \langle K_A\rangle^2$ is exactly the capacity of entanglement \cite{de_Boer_2019}.  
\end{proof}
\end{document}